\newtheorem{theorem}{Theorem}
\begin{document}

\title{Improving Network Degree Correlation by Degree-preserving Rewiring}
\author{Shuo Zou, Bo Zhou, and~Qi~Xuan,~\IEEEmembership{Senior Member,~IEEE,}
\thanks{This work was supported in part by the National Natural Science Foundation of China under Grant 61973273, by the Zhejiang Provincial Natural Science Foundation of China under Grant LR19F030001, by the National Key R\&D Program of China under Grant 2020YFB1006104, and by the Research and Development Center of Transport Industry of New Generation of Artificial Intelligence Technology. \emph{(Corresponding authors: Qi Xuan.)}}
\thanks{
All authors are with the Institute of Cyberspace Security, College of Information Engineering, Zhejiang University of Technology, Hangzhou 310023, China.} 
}



\maketitle

\begin{abstract}
Degree correlation is a crucial measure in networks, significantly impacting network topology and dynamical behavior. The degree sequence of a network is a significant characteristic, and altering network degree correlation through degree-preserving rewiring poses an interesting problem. In this paper, we define the problem of maximizing network degree correlation through a finite number of rewirings and use the assortativity coefficient to measure it. We analyze the changes in assortativity coefficient under degree-preserving rewiring and establish its relationship with the $s-$metric. Under our assumptions, we prove the problem to be monotonic and submodular, leading to the proposal of the GA method to enhance network degree correlation. By formulating an integer programming model, we demonstrate that the GA method can effectively approximate the optimal solution and validate its superiority over other baseline methods through experiments on three types of real-world networks. Additionally, we introduce three heuristic rewiring strategies, EDA, TA and PEA, and demonstrate their applicability to different types of networks. Furthermore, we extend the application of our proposed rewiring strategies to investigate their impact on several spectral robustness metrics based on the adjacency matrix, revealing that GA effectively improves network robustness, while TA performs well in enhancing the robustness of power networks, PEA exhibits promising performance in routing networks, and both heuristic methods outperform other baseline methods in flight networks. Finally, we explored the robustness of several centrality metrics in the network while enhancing network degree correlation using the GA method. We found that, for disassortative real networks, closeness centrality and eigenvector centrality are typically robust. When focusing on the top-ranked nodes, we observed that all centrality metrics remain robust in disassortative networks.

\end{abstract}

\begin{IEEEkeywords}Complex network, Degree correlation, Assortativity coefficient.
\end{IEEEkeywords}

\section{Introduction}
\IEEEPARstart{C}{omplex} networks serve as powerful tools for abstractly representing real-world systems, where individual units are represented as nodes, and interactions between these units are represented as edges.
Therefore, research on complex networks has experienced tremendous growth in recent years. Various network properties, including the degree sequence\cite{chung2002connected,chatterjee2011random}, degree correlation\cite{park2003origin,mahadevan2006systematic} and clustering coefficient\cite{saramaki2007generalizations,mcassey2015clustering} are extensively utilized in complex network analysis to assess the topological structure of networks. 

In the field of complex networks, systems represented as networks often have different properties in reality. One of the most interesting properties is degree correlation. It represents the relationship between the degrees of connected nodes, such as whether nodes with large degrees tend to be connected to other nodes with large degrees or to nodes with small degrees. Degree correlation is an important concept in network analysis. For example, degree correlation in social networks may reflect the idea that popular individuals tend to know other popular individuals. Similarly, in citation networks, papers that are highly cited may tend to cite other highly cited papers. A network is referred to as assortative when high-degree nodes tend to connect to other high-degree nodes, and low-degree nodes tend to connect to other low-degree nodes. On the other hand, a network is called disassortative when high-degree nodes tend to connect to low-degree nodes, and low-degree nodes tend to connect to high-degree nodes. A network is considered neutral when there is no preferential tendency in connections between nodes.

There are several measures of degree correlation for undirected networks. The most popular among them is the assortativity coefficient, denoted as $r$. It is the Pearson correlation coefficient between the degrees of connected nodes in the network. The assortativity coefficient is a normalized measure, ranging between -1 and 1. It was initially introduced by Newman\cite{newman2002assortative,newman2003mixing}. Li~\emph{et al.}\cite{li2005towards} proposed the $s$-metric, which is obtained by calculating the product of the degrees of connected nodes. When using this measure, normalization is often required. This involves computing the maximum and minimum $s$-metric under the current degree sequence, which can be challenging. When the degree sequence of the network remains unchanged, the definition of the assortativity coefficient includes the $s$-metric. Therefore, this paper primarily uses the assortativity coefficient to measure the degree correlation in networks.

The problem considered in this paper is as follows: Given a simple undirected network and a budget, we aim to maximally improve the degree correlation of the network while meeting the budget constraint through the modification of its topological structure. The changes to the network's topological structure can take various forms, including edge addition, edge deletion, and edge rewiring. We primarily consider edge rewiring, altering the network's topological structure without changing the node degrees. This is practically meaningful since, in real-world networks, nodes often have capacity constraints. For instance, increasing the number of flights between airports may raise operational costs, which could be impractical in the short term. However, adjusting flights between airports through rewiring is a relatively straightforward approach. In router networks, rewiring connections between routers allows adjustments without altering their loads.

There is some research on changing network degree correlation through rewiring. Xulvi~\emph{et al.}\cite{xulvi2005changing} proposed two algorithms that aim to achieve the desired degree correlation in a network by producing assortative and disassortative mixing, respectively.  Li~\emph{et al.}\cite{jing2016algorithm} developed a probabilistic attack method that increases the chances of rewiring the edges between nodes of higher degrees, leading to a network with a higher degree of assortativity. Geng~\emph{et al.}\cite{geng2021global} introduced a global disassortative rewiring strategy aimed at establishing connections between high-degree nodes and low-degree nodes through rewiring, resulting in a higher level of disassortativity within the network. However, the mentioned works did not consider the rewiring budget. This paper primarily investigates how to maximize the degree correlation of a network through rewiring under a limited budget.

Degree correlation is a crucial property in complex networks, and different types of networks exhibit varying degrees of degree correlation. These differences in degree correlation result in distinct topological characteristics\cite{jing2007effects, zhou2008generating,noldus2013effect}, such as the distribution of path lengths and Rich-club coefficient, within networks. The diverse effects of degree correlation play a significant role in processes like disease propagation\cite{chang2020impact,bogua2003epidemic} and also impact the robustness of networks\cite{zhou2014memetic,menche2010asymptotic,geng2021global}. In this paper, we mainly focus on examining the impact of our method on several robustness measures based on the network adjacency-spectrum, while altering degree correlation. This helps determine whether our method contributes to enhancing the robustness of the network. 

The robustness of centrality metrics in networks is also an important research question. It investigates whether centrality metrics can maintain robustness when the network's topology changes. Some researchers have studied the variations of various centrality metrics in networks when nodes or edges fail\cite{platig2013robustness,martin2019influence,niu2015robustness}. In this paper, we explore which centrality metrics in the network can maintain robustness while our rewiring methods improve network degree correlation.

In this paper, we investigated the problem of maximizing network degree correlation through a finite number of rewirings. Our contributions are summarized as follows:
 \begin{itemize}
\item We defined the problem of maximizing degree correlation and proposed the GA, EDA, TA, and PEA algorithms.
\item We proved that under our assumptions, the objective function is monotonic and submodular.
\item We validated that GA can effectively approximate the optimal solution and significantly improve network degree correlation on several real networks. Meanwhile, EDA,TA and PEA also demonstrated their respective advantages.
\item We applied these rewiring strategies to enhance network robustness and found that GA can effectively improve network robustness. Additionally, EDA, TA and PEA showed applicability to different types of networks for enhancing network robustness.
\item We analyzed the robustness of several centrality metrics when networks were rewired using the GA method. Our findings indicate that in disassortative real networks, closeness centrality and eigenvector centrality exhibit robustness. Furthermore, upon focusing on the top-ranked nodes, we observed that all centrality metrics maintain their robustness in disassortative networks.
 \end{itemize}
    

The structure of the paper is as follows. In Sec. \ref{sec}, we introduce the degree correlation measure of networks, specifically the assortativity coefficient, and analyze its variation under degree-preserving rewiring. We also establish a connection between the assortativity coefficient and another degree correlation metric, the $s-$metric. In Sec. \ref{sec}, we define the problem of maximizing degree correlation through rewiring and analyze the objective function is monotonic and submodular, leading to the proposal of the GA strategy, and we describe three heuristic rewiring methods, EDA, TA and PEA. In Sec. \ref{thre}, we validate the rationality of our assumption and demonstrate that the GA method effectively approximates the optimal solution. Through experiments on different types of real networks, we demonstrate that GA can effectively enhance network degree correlation, while EDA, TA, and PEA are applicable to different network types. Additionally, we investigate the impact of these rewiring methods on the spectral robustness of networks, and explore the robustness of several centrality metrics in the network while enhancing network degree correlation using the GA method. Finally, Sec. \ref{thi} concludes with a summary of findings and outlines avenues for future research.


\section{Methodology}\label{sec}
\subsection{Preliminaries and Ideas}
We consider an undirected and unweighted network $G=(V,E)$, where the set of vertex $V$ is a set of $N$ nodes, and $E$ is a set of edges $M$. The assortativity coefficient is a widely used measure to quantify the degree correlation in a network. In this paper, we primarily utilize the assortativity coefficient to measure the degree correlation of the network. 
The assortativity coefficient is defined as\cite{newman2003mixing}:
\begin{equation}
    \mathbf{r} = \frac{M^{-1}\sum_i^M(j_{i}k_{i})-[M^{-1}\sum_i^M\frac{1}{2}(j_{i}+k_{i})]^2}{M^{-1}\sum_i^M\frac{1}{2}(j_{i}^2+k_{i}^2)-[M^{-1}\sum_i^M\frac{1}{2}(j_{i}+k_{i})]^2}.
    \label{eq:1}
\end{equation}
where $k_i$ and $j_i$ are the degrees of the endpoins of the $i$th edge, respectively. 

The degree distribution is a crucial characteristic of a network as it reveals the connectivity patterns and the overall topology of the network. Therefore, we employ a rewiring strategy to alter the network's topology without changing the degree of each node in the network. The rewiring strategy is shown in Figure~\ref{fig:99}. We choose an edge pair $\langle (i,j),(k,l) \rangle$ from the original network $G$ that satisfies $(i,j) \in E$ and $(k,l) \in E$, which can be rewired as $(i,k)$ and $(j,l)$ if $(i,k), (j,l) \notin E$, or can be rewired as $(i,l)$ and $ (k,j)$ if $(i,l),(k,j) \notin E$. Obviously, the rewiring strategy does not change the degree of the nodes. According to Formula~\ref{eq:1}, $\sum_i^{M}\frac{1}{2}(j_i^2+k_i^2)$ and $\sum_i^{M}\frac{1}{2}(j_i+k_i)$ are also unchanged under the rewiring strategy. The rewiring strategy only affects the following formula:
\begin{equation}
    \mathbf{s} = \sum_i^M(j_ik_i).
    \label{eq:02}
\end{equation}
We can observe that $s$ is the $s$-metric proposed by Li~\emph{et al.}\cite{li2005towards} Typically, the $s$-metric needs to be normalized to quantify the degree correlation of the network. The normalized $s$-metric is defined by \cite{li2005towards,li2004first}:
\begin{equation}
    s^n = \frac{s-s_{min}}{s_{max}-s_{min}}.
    \label{eq:03}
\end{equation}
Here, $s_{min}$ and $s_{max}$ are the minimum and the maximum values of $s$ from networks with the same degree sequence. Typically, calculating $s_{min}$ and $s_{max}$ is not straightforward, so more often, the assortativity coefficient is used to measure the degree correlation of networks. However, under the rewiring strategy, the change in assortativity coefficient translates to the change in the $s$-metric, and their meanings are equivalent. Nevertheless, to distinctly represent the degree correlation of the network, we will still use the assortativity coefficient in the following paper.

When the edge pair $\langle(i,j),(k,l) \rangle$ is rewired to $\langle(i,k),(j,l)\rangle$, the change in the assortativity coefficient can be converted to the change in $s$, calculated as:
\begin{equation}
    value_{\langle(i,j),(k,l)\rangle}=(d_i d_k + d_j d_l)-(d_i d_j + d_k d_l). 
    \label{eq:04}
\end{equation}
where $d_i$ represents the degree of node $i$.  It is important to note that $value_{\langle(i,j),(k,l)\rangle}$ represents the rewiring of edge pair $\langle (i,j),(k,l) \rangle$ to $\langle (i,k),(j,l) \rangle$. The edge pair $\langle (i,j),(k,l) \rangle$ could also be rewired to $\langle (i,l),(j,k) \rangle$, the change in $s$ denoted as $value_{\langle(i,j),(l,k)\rangle}$. Figure \ref{fig:99} illustrates the calculation of the $value$ for a edge pair during the rewiring process.
\begin{figure*}[ht]
\centering
\includegraphics[scale=0.4]{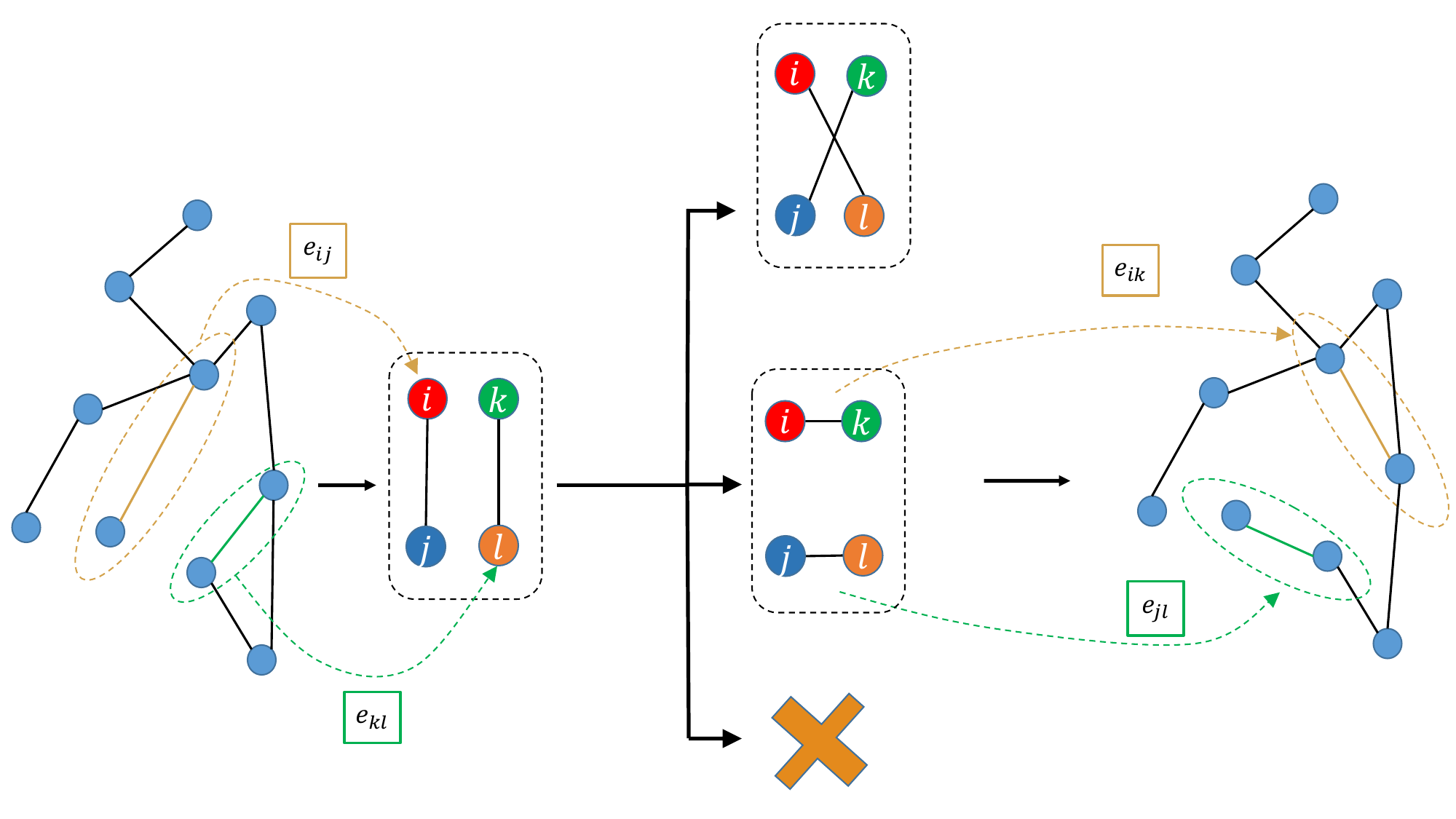}
\caption{The degrees of nodes $i$, $j$, $k$, and $l$ are $4$, $1$, $3$, and $2$, respectively. The rewiring of the edge pairs $\langle(i,j),(k,l)\rangle$ can occur in two possible ways, corresponding to $value_{\{(i,j),(k,l)\}}=(4 \times  3 + 1 \times 2)-(4 \times 1 + 3 \times 2)=4$ and $value_{\{(i,j),(l,k)\}}=(4 \times  2 + 1 \times 3)-(4 \times 1 + 3 \times 2)=1$. If there exist edges $(i, l)$ or $(j, k)$, and $(i, k)$ or $(j, l)$ in the network, then the edge pair $\langle(i, j), (k, l)\rangle$ cannot be rewired.}
\label{fig:99}
\end{figure*}
\subsection{Problem Definition}
For a simple  network $G(V, E)$, let $S$ be the set of rewired edge pairs. We denote the network after rewiring as $G+S$. The assortativity coefficient of $G+S$ is represented by $r(S)$, and the change in the assortativity coefficient can be expressed as $\Delta r(S)$.

In networks, rewiring a limited set of edges to maximize a certain metric is often challenging, as it involves a more complex combinatorial optimization problem compared to adding or removing a limited number of edges to alter a network metric. Here, we assume that newly generated edge pairs resulting from rewiring will not be considered for further rewiring in subsequent steps. This encompasses two scenarios: firstly, if an edge pair $\langle(i,j),(k,l)\rangle$ is reconfigured to $\langle(i,k),(j,l)\rangle$, edges $(i,k)$ and $(j,l)$ will not be rewired with other edges in subsequent steps. Secondly, when edge $(i,j)$ is not rewired, the edge pair $\langle(a,i),(b,j)\rangle$ cannot be rewired to $\langle(a,b),(i,j)\rangle$, because edge $(i,j)$ already exists in the network. However, when edge $(i,j)$ is rewired, the edge pair $\langle(a,i),(b,j)\rangle$ can be reconfigured to $\langle(a,b),(i,j)\rangle$. Nevertheless, our assumption excludes the scenario of considering $\langle(a,i),(b,j)\rangle$ being rewired to $\langle(a,b),(i,j)\rangle$ at any point.
Therefore, we can identify all potential edge pairs within the original graph without considering the additional components during the rewiring process. This greatly simplifies our reconfiguration problem. Subsequent experiments can validate the reasonableness of our assumption. 

When rewiring in a network needs to occur in parallel, it is a meaningful assumption that the selected pairs of edges for rewiring align precisely. For instance, in a flight network, continuously adjusting flight routes within a short period is impractical. Instead, the entire flight network typically undergoes a unified adjustment of flight routes at a specific time, necessitating parallel rewiring of flight routes.

We aims to maximize the assortativity coefficient through a limited number of rewirings, name as \textbf{ Maximum Assortative Rewiring (MAR)}. We define the following set function optimization problem: 

\begin{equation}
    \underset{S \subset EP,|S|=k}{maximize} \quad \Delta r(S).
    \label{eq:2}
\end{equation}

where $EP$ is a set of rewirable edges. Since the change in the assortativity coefficient can be converted to the change in $s$, the optimization problem (\ref{eq:2}) is equivalent to the following problem:
\begin{equation}
    \underset{S \subset EP,|S|=k}{maximize} \quad \Delta s(S).
    \label{eq:3}
\end{equation}
In MAR, the set $EP$ consists of all possible rewired edge pairs with a positive $value$ in the original network $G$. These edge pairs in $EP$ satisfy two mutually exclusive conditions. 
\begin{itemize}
    \item Constraint 1: The pair of edges formed by the same edge and other edges are mutually exclusive, as each edge can only be rewired once.
    \item Constraint 2: Edge pairs that result in the same edge after rewiring are also mutually exclusive, since simple graphs do not allow multiple edges between the same pair of nodes. 
\end{itemize}

Figure \ref{fig:1010} illustrates a network along with its corresponding $EP$. Suppose we select the edge pair $\langle(2,3),(4,5)\rangle$ and rewire it to $\langle(2,4),(3,5)\rangle$. According to Constraint 1, the edge pairs $\langle(2,3),(4,5)\rangle$, $\langle(2,8),(4,5)\rangle$, and $\langle(2,3),(6,7)\rangle$ cannot be chosen for the next rewiring process. Following Constraint 2, the edge pair $\langle(2,8),(4,9)\rangle$ also cannot be selected for the next rewiring process.
\begin{figure}[ht]
\centering
\includegraphics[scale=0.3]{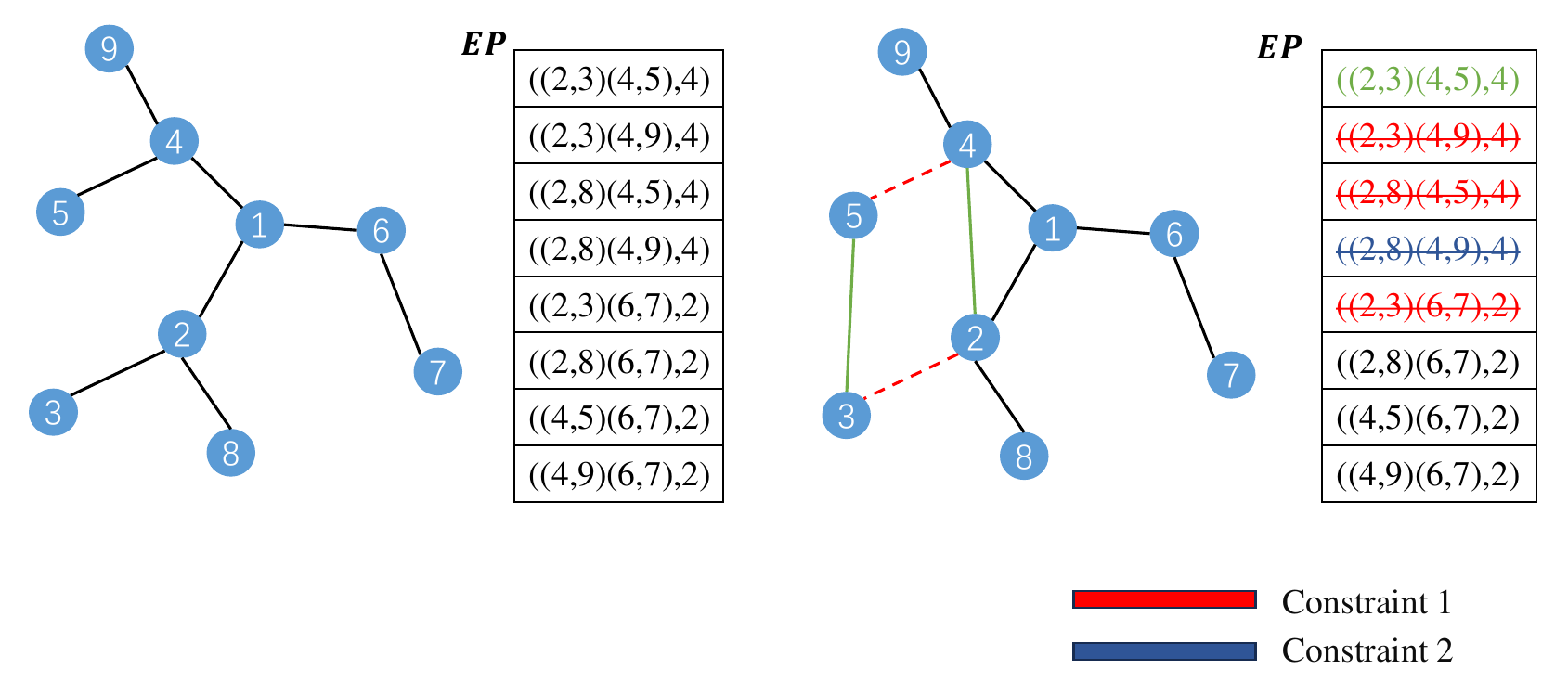}
\caption{The left side illustrates the original network along with its corresponding $EP$. In addition to the rewirable edge pairs, $EP$ also includes their corresponding $value$. The network on the right side represents the change in $EP$ corresponding to the rewiring of the edge pair $\langle(2,3),(4,5)\rangle$ to $\langle(2,4),(3,5)\rangle$. According to Constraint 1, the edge pairs $\langle(2,3),(4,5)\rangle$, $\langle(2,8),(4,5)\rangle$ and $\langle(2,3),(6,7)\rangle$ cannot be chosen for the next rewiring process, we use red lines to indicate this. Following Constraint 2, the edge pair $\langle(2,8),(4,9)\rangle$ also cannot be selected for the next rewiring process, we use orange lines to indicate this.}
\label{fig:1010}
\end{figure}

\begin{theorem}\label{theorem1}
In the MAR problem, $\Delta s(S)$, exhibits monotonic behavior.
\end{theorem}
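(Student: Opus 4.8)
The plan is to show that $\Delta s$ is an additive (modular) set function with strictly positive increments, from which monotonicity follows immediately. The whole argument rests on the degree-preserving nature of the rewiring together with the standing assumption that newly created edges are never rewired again.

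First I would record the crucial invariance: since each rewiring swaps endpoints but removes no node and alters no node's degree, the degree sequence $(d_1,\dots,d_N)$ is identical before and after \emph{any} sequence of admissible rewirings. Consequently, for every edge pair $p=\langle(i,j),(k,l)\rangle\in EP$ the quantity $value_p=(d_id_k+d_jd_l)-(d_id_j+d_kd_l)$ depends only on the four invariant endpoint degrees, and is therefore a fixed constant that does not change according to which other rewirings have already been applied.

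Next I would establish the additive decomposition $\Delta s(S)=\sum_{p\in S}value_p$. Recalling that $s=\sum_{e\in E}d_{e_1}d_{e_2}$ is a sum over edges of the product of endpoint degrees, a single rewiring of $p$ deletes its two edges and inserts two new ones, changing $s$ by exactly $value_p$. By Constraint 1 no original edge participates in two selected pairs, and by the standing assumption the edges created by one rewiring are never consumed by another; hence the deletions and insertions triggered by distinct $p\in S$ act on pairwise disjoint edges, so their contributions to $s$ simply accumulate and yield $\Delta s(S)=\sum_{p\in S}value_p$. This decomposition is the crux of the proof and the place where the assumption does the real work: without it, a later rewiring could act on an edge produced by an earlier one, the relevant endpoint set would shift, and the clean additivity would break down.

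Finally, monotonicity follows at once. By construction $EP$ contains only edge pairs with $value_p>0$, so every summand is strictly positive; hence for any feasible $S\subseteq T\subseteq EP$ we have $\Delta s(T)-\Delta s(S)=\sum_{p\in T\setminus S}value_p\ge 0$, and equivalently the marginal gain $\Delta s(S\cup\{p\})-\Delta s(S)=value_p>0$ for any admissible $p\notin S$. Thus $\Delta s(S)$ is monotonically non-decreasing, as claimed.
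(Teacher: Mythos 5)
Your proof is correct and follows essentially the same route as the paper's: the marginal gain of adding a pair $p$ to $S$ equals $value_p$, which is positive by the construction of $EP$, hence monotonicity. You are somewhat more thorough than the paper in that you explicitly justify why $value_p$ is invariant under prior rewirings (degree preservation plus the no-reuse assumption) and spell out the full additive decomposition $\Delta s(S)=\sum_{p\in S}value_p$, steps the paper leaves implicit.
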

\begin{proof}
In MAR, for any given solution $S$, let us consider an edge pair $\langle(i, j), (k, l)\rangle$ in $G+S$ that can be rewired. The change in the assortativity coefficient, denoted $\Delta s(S \cup \{\langle(i, j), (k, l)\rangle\})$, can be expressed as $\Delta s(S \cup \{\langle(i, j), (k, l)\rangle\}) = \Delta s(S) + value_{\langle(i, j), (k, l)\rangle}$. Since $value_{\langle(i, j), (k, l)\rangle}>0$, it follows that $\Delta s(S \cup \langle(i, j), (k, l)\rangle) > \Delta s(S)$, indicating that $s(S)$ is increasing monotonically.
\end{proof}
\begin{algorithm}[!t]
  \caption{GA} 
  \begin{algorithmic}[1]
    \Require
      Graph $G=(V,E)$; an integer $k$
    \Ensure
       A set $S$ and $|S|=k$
        \State $EP \leftarrow $ the set of possible rewired edge pairs with a positive $value$ in the original $G$, sorted in descending order.
    \State $S \leftarrow \emptyset$
    \State $index \leftarrow 0$
    \State $n \leftarrow 0$
    \State $len \leftarrow length(EP)$
    \While {$n < k$ and $index < len$}
        \State edge $(i,j),(k,l)  \leftarrow EP [index]$
        \State $index \leftarrow index + 1$
        \If{the edges $(i,k)$ and $(j,l)$ can be rewired in $G$}
            \State $S \leftarrow S \cup \{\{(i,j),(k,l)\}\}$
            \State $G \leftarrow G+\{\{(i,j),(k,l)\}\}$
            \State $n \leftarrow n + 1$
        \EndIf
    \EndWhile
    \State
    \Return {$S$}
  \end{algorithmic}
  \label{alg:1}
\end{algorithm}
\begin{theorem}\label{theorem2}
In the MAR problem, $\Delta s(S)$ is submodular.
\end{theorem}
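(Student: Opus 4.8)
The plan is to verify the diminishing-returns form of submodularity directly: for any two feasible solutions $A \subseteq B \subseteq EP$ and any edge pair $x \in EP \setminus B$, I would show that
\[
\Delta s(A \cup \{x\}) - \Delta s(A) \ge \Delta s(B \cup \{x\}) - \Delta s(B).
\]
The crucial ingredient, inherited from the proof of Theorem~\ref{theorem1}, is that whenever an edge pair $x=\langle(i,j),(k,l)\rangle$ can legitimately be rewired given the current solution, its marginal contribution is exactly the constant $value_{x}=(d_i d_k + d_j d_l)-(d_i d_j + d_k d_l)$. This quantity depends only on the node degrees, which are invariant under degree-preserving rewiring, so it never changes as the solution grows; the only thing that can change is whether $x$ remains addable at all.

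Next I would formalize the feasibility structure induced by Constraint~1 and Constraint~2. Both constraints are \emph{pairwise} exclusions between edge pairs: a candidate set is feasible precisely when it contains no two mutually exclusive edge pairs, so feasibility is exactly the hereditary, conflict-free property of an independence system. I would then exploit the resulting monotone conflict relation: if $x$ conflicts with some element already present in the smaller set $A$, that same element lies in $B \supseteq A$, so $x$ also conflicts with $B$; contrapositively, if $x$ can be added to $B$ without violating any constraint, it can certainly be added to $A$. In short, the marginal gain of $x$ is $value_{x}$ when $x$ is still addable and $0$ when it is blocked, and ``addable given $A$'' is implied by ``addable given $B$.''

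With these two facts in hand, the inequality follows from a short three-case analysis. If $x$ is addable to $B$, then it is addable to $A$ and both marginal gains equal $value_{x}$, giving equality. If $x$ is addable to $A$ but blocked by $B$, the left side is $value_{x}>0$ while the right side is $0$, so the inequality is strict. If $x$ is already blocked by $A$, then it is blocked by $B$ as well and both sides are $0$. In every case the marginal gain with respect to $A$ dominates that with respect to $B$, which establishes submodularity; intuitively, the objective is modular on feasible sets, and it is the independence constraint that creates the diminishing returns.

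The main obstacle I anticipate is not the algebra but pinning down the feasibility bookkeeping cleanly: I must argue rigorously that feasibility is governed \emph{entirely} by the pairwise conflicts of Constraint~1 and Constraint~2 (so that a conflict with a subset forces a conflict with any superset), and I must be explicit about how $\Delta s$ is evaluated on a blocked addition so that those cases correctly contribute a marginal gain of $0$ rather than something ill-defined. This is exactly where the paper's standing assumption---that edges created by rewiring are never rewired again, so that $EP$ is fixed in advance from the original graph $G$---does the real work, since it guarantees the conflict relation is static and independent of the order in which edge pairs are selected.
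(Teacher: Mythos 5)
Your proposal is correct and follows essentially the same route as the paper: the decisive fact in both arguments is that, under the standing assumption that edges created by rewiring are never rewired again, the marginal gain of an edge pair $x$ is the fixed constant $value_x$ determined by node degrees alone, so the objective is modular and therefore submodular. The only difference is that you additionally work out the blocked cases (marginal gain $0$, with conflicts monotone under set inclusion), which the paper sidesteps by restricting attention to pairs that ``satisfy the rewiring requirements''; your treatment is the more complete version of the same argument.
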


\begin{proof}
For each pair $S$ and $T$ of MAR such that $S \subseteq T$, and for each pair of rewired edge pairs $\langle(i, j), (k, l)\rangle$ in $G(S)$ that satisfy the rewiring requirements, if $\Delta s(S)$ is submodular, then $s(S \cup \{\langle(i, j), (k, l)\rangle\}) - s(S)$ should be greater than or equal to $s(T \cup \{\langle(i, j), (k, l)\rangle\})-s(T)$. We know that the impact of rewiring a pair of edges on the network's assortativity coefficient only depends on that specific pair of edges, and rewiring other pairs of edges will not affect the assortativity coefficient change of this specific pair. so $s(S \cup \{\langle(i, j), (k, l)\rangle\}) - s(S) = s(T \cup \{\langle(i, j), (k, l)\rangle\})-s(T) = value_{(i, j), (k, l)}$, so $\Delta s(S)$ is submodular.
\end{proof}

\subsection{Rewiring Method}
Let's consider the following optimization problem: given a finite set $N$, an integer $k$, and a real-valued function $z$ on the set of subsets of $N$, find a set $S \in N$ with $|S| \leq k$ such that $z(S)$ is maximized. If $z$ is monotone and submodular, the following greedy algorithm achieves an approximation of $1-\frac{1}{e}$ \cite{nemhauser1978analysis}: start with the empty set and repeatedly add the element that maximizes the increase in $z$ when added to the set. Theorem~\ref{theorem1} and~\ref{theorem2} indicate that the objective function (\ref{eq:3}) is both monotone and submodular. As a result, a simple greedy strategy can be used to approximate the problem (\ref{eq:2}). We propose the \textbf{Greedy Assortative} to maximize the assortative coefficient.

\textbf{Greedy Assortative(GA):}
First, identify all possible pairs of rewired edges with a positive $value$ in the original graph $G$. Initialize the set $S$ is empty. Then select the pair with the highest positive $value$ and try to rewire it. If successful, add it to $S$. if not, move on to the pair with the second highest $value$ and repeat the process until $|S|=k$.

The details of this algorithm are summarized in Algorithm \ref{alg:1}. In fact, the time complexity of the algorithm is $O(M^3\log(M))$, where $M$ represents the number of edges in the graph. The GA method requires identifying all possible rewiring edge pairs with positive $value$ and sorting them in descending order. When the size of a network is large, the number of potential edge pairs is enormous, and the primary time cost of the algorithm lies in sorting these large numbers of potential edge pairs. Although there are sorting algorithms available that can effectively reduce sorting time, it may still be time-consuming for a large-scale network. Indeed, there is relatively little research on changing network degree correlations through a limited number of rewirings, and there are few related heuristic rewiring methods available at present. Therefore, considering the characteristics of assortative networks, we propose several heuristic methods with a time complexity of $O(N)$ or $O(N^2)$.

\textbf{Edge Difference Assortative(EDA):}
To enhance network assortativity, we prioritize rewiring edges with a large difference in degrees between their endpoints. In the rewiring process, we first select the edge with the largest difference in degrees, then proceed to choose the edge with the next largest difference in degrees that satisfies the rewiring condition. This selected edge pair is then rewired to ensure that the edge with the largest difference in degrees is addressed. We continue this process by selecting the edge with the largest difference in degrees from the remaining edges.

\textbf{Targeted Assortative(TA):}
This is an adaptation of Geng's disassortative rewiring strategy\cite{geng2021global}, which prioritizes connecting nodes with higher degrees to nodes with lower degrees, thereby inducing disassortativity in the network. We employ a similar approach, giving priority to rewiring that connects nodes with the highest degrees before considering connections among other nodes.

\textbf{Probability Edge Assortative(PEA):} 
Probability assortative considers the tendency of high-degree nodes to connect, enhancing network assortativity. We can further enhance assortativity by focusing on rewiring edges with a significant difference in degrees. Initially, calculate the degree difference for each edge in the network, using the degree difference as the probability weight for edge selection. Probabilistically choose two edges, disconnect them, and then connect the high-degree nodes with each other and the low-degree nodes with each other.

Next, we focus on explaining more implementation details of the three heuristic methods we proposed or improved.

The EDA algorithm, as shown in Algorithm \ref{alg:567}, first sorts the edges in the network in descending order based on the degree difference. It selects the edge with the largest degree difference, denoted as $(i, j)$, and then attempts to rewire it with the edge with the second largest degree difference, denoted as $(k, l)$. We then sort the four nodes corresponding to these two edges in descending order of their degrees, denoted as $a \geq b \geq c \geq d$. We rewire the edge pair $\langle(i, j),(k, l)\rangle$ to $\langle(a, b)(c, d)\rangle$, thereby disconnecting nodes with large degree differences while connecting nodes with similar degrees, thus enhancing the network's assortativity. If rewiring is not possible, we proceed to select the next edge in the sequence and attempt to rewire it. If none of the edges can be rewired with it, the edge is removed from the sequence.

The TA algorithm, as shown in Algorithm \ref{alg:2}, utilizes a $nodeList$, which is a list of all nodes in the network arranged in descending order of their degrees. Node $a$ represents the highest degree node in each primary iteration, while node $z$ represents the next highest degree node which has not been rewired yet in each primary iteration. $p$ and $q$ represent the indices of nodes $a$ and $z$ in the $nodeList$, respectively. $S(a)$ denotes the set of neighbor nodes of node $a$, while $S(a)-S(y)$ represents the set of nodes that are neighbors of node a but not neighbors of node $y$. Node $y$ is the node with the minimum degree in the set $S(z)$, and node $b$ is the node with the minimum degree in the set $S(a)-S(y)$. The degrees $d_z$, $d_y$, and $d_b$ are defined similarly. The condition $d_z > d_y$ and $d_z > d_b$ indicates that reconnecting the edge pair $\langle(a, b), (z, y)\rangle$ to $\langle(a, z), (b, y)\rangle$ effectively enhances the network's assortativity. The terminal condition of the algorithm is not solely determined by the budget $k$. When the budget $k$ is large or when the network size is small, the algorithm may terminate before reconnecting $k$ times due to constraints such as $d_z > d_y$ and $d_z > d_b$, indicating termination after considering all nodes.

The PEA algorithm, as shown in Algorithm \ref{alg:3}, first calculates the degree difference for each edge pair of nodes, denoted as $D_k = [diff_1, diff_2, diff_3, ..., diff_M]$. We can compute the probability density for each edge as $p_i = d_i / \sum(N_k)$. Based on the probabilities $P_k$, we select the edge pair $\langle(i, j), (k, l)\rangle$, where edges with larger degree differences have a higher probability of being chosen. The rewiring process corresponds to that in EDA.
\begin{algorithm}[!t]
  \caption{EDA} 
  \begin{algorithmic}[1]
    \Require
      Graph $G=(V,E)$; an integer $k$.
    \State $n \leftarrow 0$
    \State $edgeList \leftarrow $ A list of edge in $G$.
    \While {$n < k$}
        \State The $edgelist$ sorted in descending order based on the degree difference.
        \State $(i,j) \leftarrow edgeList[0]$
        \State $p \leftarrow 1$ the degree of $a$
        \While{$p < length(edgeList)$}
            \State $(k,l) \leftarrow edgeList[p]$
            \State $a,b,c,d \leftarrow$ The nodes of the two edges $(i, j)$ and $(k, l)$ are arranged in descending order based on their degrees.
            \If{$(i,j),(k,l)$ can be rewired to $(a,b),(c,d)$}
                \State $G \leftarrow G+\{\langle (a,c),(b,d)\rangle\}$
                \State $n \leftarrow n + 1$
                \State $edgeList \leftarrow edgeList - \{(a,b),(c,d)\} + \{(a,c),(b,d)\}$
                \State $n \leftarrow n + 1$
            \Else
                \State $p \leftarrow p + 1$
                \If{$p == length(edgeList)$}
                    \State $edgeList \leftarrow edgeList - \{(i,j)\}$
                \EndIf
            \EndIf
        \EndWhile
    \EndWhile
  \end{algorithmic}
  \label{alg:567}
\end{algorithm}
\begin{algorithm}[!t]
  \caption{TA} 
  \begin{algorithmic}[1]
    \Require
      Graph $G=(V,E)$; an integer $k$.
    \State $nodeList \leftarrow $ A list of nodes sorted in descending order based on node degree.
    \State $n \leftarrow 0$
    \State $p \leftarrow 0$
    \State $q \leftarrow p + 1$
    \State $N \leftarrow length(nodeList)$
    \While {$n < k$ and $p < N - 1$}
        \If{$q = N$}
            \State $p \leftarrow p + 1$
            \State $q \leftarrow p + 1$
            \State continue
        \EndIf
        \State Get the node with highest degree as $a$ according to $nodeList[p]$
        \State $d_a \leftarrow$ the degree of $a$
        \State Get the node with lowest degree as $z$ according to $nodeList[q]$
        \State $d_z \leftarrow$ the degree of $z$
        \State $key \leftarrow True$
        \While{The $G$ has the edge $(a,x)$}
            \State $q \leftarrow q + 1$
            \If{$q = N$}
                \State $key \leftarrow False$
                \State break
            \EndIf
            \State $z \leftarrow nodeList[q]$
            \State $d_z \leftarrow$ the degree of $z$
            \If{$key = False$}
                \State $p \leftarrow p + 1$
                \State $q = p + 1$
            \Else
                \State $S_a \leftarrow$ the neighbors nodes of $a$
                \State $S_z \leftarrow$ the neighbors nodes of $z$
                \State the node $y$, which degree smallest in $S_z$
                \State $S_y \leftarrow$ the neighbors nodes of$ y$
                \State $S_{a-y} \leftarrow S_a - S_y$  
                \If{$S_{a-y} = \emptyset$}
                    $q = q + 1$
                \Else
                    \State the node $b$, which degree smallest in $S_{a-y}$
                    \If{$d_z > d_y$ and $d_z > d_b$}
                        \State $G \leftarrow G+\{\langle (a,b),(z,y)\rangle\}$
                        \State $n \leftarrow n + 1$
                        \State $q \leftarrow q + 1$
                    \Else
                        \State $q \leftarrow q + 1$
                    \EndIf
                \EndIf
            \EndIf
        \EndWhile
    \EndWhile
  \end{algorithmic}
  \label{alg:2}
\end{algorithm}
\begin{algorithm}[!t]
  \caption{PEA} 
  \begin{algorithmic}[1]
    \Require
      Graph $G=(V,E)$; an integer $k$.
    \State $D_k \leftarrow [diff_1,diff_2,diff_3,...,diff_M] $, the difference in degrees between the nodes at both ends of each edge.
    \State $P \leftarrow $ A probability distribution is calculated for each edge based on the difference in degrees of the two end nodes.
    \State $n \leftarrow 0$
    \While {$n < k$}
        \State $(i,j),(k,l) \leftarrow$ Randomly select two edges based on the probability distribution $P$.
        \State $a,b,c,d \leftarrow$ The nodes of the two edges $(i, j)$ and $(k, l)$ are arranged in descending order based on their degrees.
        \If{$(i,j),(k,l)$ can be rewired to $(a,b),(c,d)$}
            \State $G \leftarrow G+\{\langle (a,c),(b,d)\rangle\}$
            \State $n \leftarrow n + 1$
        \EndIf
    \EndWhile
  \end{algorithmic}
  \label{alg:3}
\end{algorithm}

\subsection{Network Robustness}
\label{nr}
Robustness refers to the ability of a network to continue operating and supporting its services when parts of the network are naturally damaged or subjected to attacks. For example, in a power network, a robust electrical network should continue functioning without significant impact even if some power plants are unable to operate or certain lines are disrupted. There are currently many robustness metrics available to measure the robustness of a network. Different robustness metrics have different implications for the robustness of a network. For example, the average shortest path\cite{motter2002cascade,morohosi2010measuring} and efficiency\cite{pu2012efficient,schieber2016information} quantify the shortest path distances between pairs of nodes in the network. $f$-robustness\cite{jing2007effects} and $R$-robustness\cite{louzada2013smart,herrmann2011onion} are directly related to the largest connected component of the network. In addition to these metrics that utilize the network's topology to quantify its robustness, there exists another type of robustness metric based on the adjacency matrix, known as spectral-based robustness metrics. Spectral-based robustness metrics have been demonstrated to be associated with information propagation and dynamic processes in networks, and as such, they are widely utilized for measuring network robustness. There is existing research suggesting a certain relationship between degree correlation and network robustness. In this study, we primarily investigate whether our rewiring strategy, aimed at enhancing network degree correlation, can simultaneously improve network robustness. We focus mainly on robustness metrics based on the adjacency matrix.

We consider three adjacency matrix-based robustness metrics, including spectral radius and natural connectivity.
\begin{enumerate}
\item Spectral radius\cite{tong2010vulnerability}: The spectral radius, denoted as $\lambda_1$
 , of a network is defined as the largest eigenvalue of the network's adjacency matrix.
\item Natural connectivity\cite{jun2010natural}: The natural connectivity is a mathematical measure defined as a special average of all the eigenvalues of the adjacency matrix with respect to the natural exponent and natural logarithm. It is directly related to the closed paths in the network. This metric is defined as:
\begin{equation}
    \bar{\lambda}(G) = ln(\frac{1}{n}\sum_{i=1}^{n}e^{\lambda_i}).
    \label{eq:08}
\end{equation}
\end{enumerate}

\subsection{Robustness of Centrality Measures}
\label{rocm}
\subsubsection{Centrality Measures}
Centrality measures are a method used to assess the importance of nodes in a network, commonly used in the study of complex networks such as social networks, information diffusion networks, transportation networks, and more \cite{kitsak2010identification}. We are interested in whether the centrality measures of the network are robust when we use our rewiring method to enhance the degree correlation of the network. We consider four widely applied centrality metrics: betweenness centrality, closeness centrality, eigenvector centrality, and k-shell.

Betweenness centrality measures the importance of a node in a network based on the number of shortest paths that pass through it\cite{freeman1977set}; Closeness centrality measures the average distance between a node and all other nodes in a network\cite{krackhardt1990assessing}; Eigenvector centrality measures the importance of a node in a network, taking into account both the node's own influence on the network and the influence of its neighboring nodes\cite{bonacich2007some}; The k-shell method calculates the node centrality by decomposing the network\cite{carmi2007model}.

\subsubsection{Robustness evaluation function of centrality measures}
As the network topology changes with the rewiring, the degree correlation of the network also changes, but the degree sequence of the network remains unchanged. This prompts us to investigate whether different centrality measures of the network exhibit robustness under rewiring strategies aimed at enhancing network degree correlation.

To evaluate the robustness of centrality measures $C$, we calculate the Spearman rank correlation coefficient $SC$ between the centrality measures $C_O$ and $C_R$ before and after rewiring, respectively. $C_O$ represents the centrality measure of the original network, while $C_R$ represents the centrality measure of the rewired network. Here, we represent the node rankings corresponding to $C_O$ and $C_R$ as $R_O$ and $R_R$, respectively. The Spearman rank correlation coefficient $SC$ can be calculated as follows:
\begin{equation}
\mathbf{SC} = \frac{\langle R_OR_R \rangle - \langle R_O \rangle \langle R_R \rangle}{\sqrt{(\langle R_O^2 \rangle - \langle R_O \rangle^2)(\langle R_M^2 \rangle - \langle R_M \rangle^2)}}
\end{equation}


The value of $SC$ ranges from -1 to 1, with a value closer to 1 indicating robustness for the respective centrality measure.

\renewcommand{\algorithmicrequire}{\textbf{Input:}}  
\renewcommand{\algorithmicensure}{\textbf{Output:}} 

\section{Experiments}\label{thre}
In this section, we first demonstrate the reasonableness of our assumptions and compare the GA method with the optimal solution. We validate the effectiveness of the GA method and our heuristic methods on real networks and explore their impact on network spectral robustness metrics. Finally, we investigate whether various centrality measures can maintain robustness during network rewiring using the GA method.

\subsection{Baseline Method}\label{bas}
Currently, there are limited methods for altering the assortativity coefficient of a network through degree-preserving rewiring. To demonstrate the effectiveness of our proposed GA method and three heuristic methods, we compare them with the following two existing heuristic methods.
\begin{enumerate}
\item Random Assortative(RA)\cite{xulvi2005changing}: Randomly select two edges without common nodes. Rewire these edges so that the two highest degree node and the two lowest-degree nodes are connected.
\item Probability Assortative(PA)\cite{jing2016algorithm}: The probability of selecting a node is determined by its degree, serving as a probability weight. The process involves probabilistically choosing two nodes, $i$ and $k$, and then selecting random neighbors, $j$ and $l$, for nodes $i$ and $k$, respectively. These chosen nodes form the rewired edges $(i, j)$ and $(k, l)$, resulting in their disconnection, followed by the connection of edges $(i, k)$ and $(j, l)$.

\end{enumerate}

Both of these algorithms are relatively simple, and their specific procedures are detailed in their corresponding papers; therefore, we will not provide a detailed description here.

\subsection{Dataset description}
We evaluate the methods using three different categories of datasets, as indicated in Table \ref{tab1}. These categories include AS router, flight, and power networks. Edge rewiring in these networks holds practical significance and applications. For Instance, in the flight network, edge rewiring involves rearranging flights between airports without affecting the airport's capacity.
\begin{itemize}
    \item \textbf{AS-733}\cite{leskovec2005graphs} The dataset consists of routing networks spanning $733$ consecutive dates. In our experiments, we selected a routing network every six months, resulting in a total of six networks. The size of the networks gradually increased, with the number of nodes ranging from 3015 to 6127, and the number of edges ranging from 5156 to 12046. All these networks are disassortative scale-free networks with degree exponent between 2 and 3.
    \item \textbf{USPowerGrid and BCSPWR10}\cite{kunegis2013konect,rossi2015network} These are two power networks for the Western states of the United States, both of which belong to neutral networks. And the degree distribution of the power network follows an exponential distribution.
    \item \textbf{USAir97 and USAir10}\cite{kunegis2013konect,rossi2015network} The USAir97 and USAir10 are flight networks composed of the air routes between American airports in 1997 and 2010, respectively. The degree distributions of these two networks lie between exponential and power-law distributions, often referred to as stretched exponential distributions.
\end{itemize}
\begin{table}[ht]
\begin{center}
\caption{Statistics of datasets. 3 categories of datasets (AS router, power, and flight networks) where rewiring can be applied. For a network with $\lvert $V$ \rvert$ nodes and $\lvert $E$ \rvert$ edges, we use $r$ to denote the assortativity coefficient of the network.}
\label{tab1}
\begin{tabular*}{\hsize}{@{}@{\extracolsep{\fill}}cccc@{}}
\hline
Dataset & $\lvert $V$ \rvert$ & $\lvert $E$ \rvert$  & $r$
\\ \hline 
AS-733-A   & 3015 & 5156& -0.229 
\\ 
AS-733-B   & 3640 & 6613& -0.210
\\ 
AS-733-C   & 4296 & 7815& -0.201
\\ 
AS-733-D   & 5031 & 9664& -0.187
\\ 
AS-733-E   & 6127 & 12046& -0.182
\\ \hline
USPowerGrid   & 4941 & 6594& 0.003 
\\ 
BCSPWR10   & 5300 & 8271& -0.052
\\ \hline
USAir97   & 332 & 2126& -0.208
\\ 
USAir10   & 1574 & 17215&-0.113
\\ \hline
\end{tabular*}
\end{center}
\end{table} 
\subsection{Assumption rationality}
We assume that during the rewiring process, newly generated edge pairs will not be rewired in subsequent steps. Below, we aim to verify the reasonableness of this assumption. Even for a small-scale network, enumerating all possible rewiring edge pairs to find the optimal solution for rewiring k edge pairs is challenging. Therefore, our goal is to validate whether our GA method can approach the maximum assortativity achievable by the network under this assumption. If, under our assumption, the GA method can bring the network close to maximum assortativity, it indicates that our assumption does not significantly affect the rewiring effectiveness, thereby validating its reasonableness.

Winterbach~\emph{et al.}\cite{winterbach2012greedy} investigated an exact approach to obtain the maximum assortative network that can be formed with a given degree sequence. They transformed the problem of constructing the maximum assortative network into the maximum weight subgraph problem on a complete graph, which was solved using b-matching \cite{papadimitriou1998combinatorial}. Furthermore, they further converted b-matching into a more efficient 1-matching problem \cite{shiloach1981another} to obtain the maximum assortative network for a given degree sequence. Considering that the time complexity of 1-matching is also relatively high, we conducted experiments on three small-scale synthetic networks. In the experiments, we first obtained the maximum assortative network achievable with the degree sequence using Winterbach~\emph{et al.}'s method and then executed the GA method to obtain the maximum assortative network. We compared whether the assortativity coefficient of the maximum assortative network obtained by the GA method could match that of the maximum assortative network obtained using Winterbach~\emph{et al.}'s method to assess the reasonableness of the assumption.

The experimental results are summarized in Table \ref{tab456}, where we present the maximum, minimum, and average approximation ratios of the assortativity coefficients obtained by the GA method compared to the theoretically maximum assortative networks across various types of networks. In the case of the WS network, the minimum approximation ratio is 0.927 and the average approximation ratio is 0.984. For the other two types of networks, the minimum and average approximation ratios are better than those of the WS network. This suggests that even under our assumption, our GA method can effectively approximate the maximum assortativity coefficient across all three types of networks. When our goal is to maximize the assortativity coefficient by rewiring a limited number of edge pairs, our algorithm typically performs better because it is less likely to select newly created edge pairs during the rewiring process compared to obtaining the network's maximum assortative network.

\begin{table}[htp]
\begin{center}
\caption{Comparing the assortativity coefficient of the maximum assortative network obtained by the GA method and the exact approach on three model networks. The first three columns denote the network type, number of nodes, and number of edges in the network. The fourth column indicates the maximum approximation ratio achieved by GA, while the fifth column presents the minimum approximation ratio achieved by GA. The sixth column displays the average approximation ratio.}
\label{tab456}
\begin{tabular*}{\hsize}{@{}@{\extracolsep{\fill}}cccccc@{}}
\hline
Network & $\lvert $V$ \rvert$ & $\lvert $E$ \rvert$  & Max Approx. & Min Approx. &Ave Approx.
\\ \hline
ER   & 50 & 100& 0.990&0.932&0.968
\\ 
WS   & 50 & 100& 1 &0.927&0.964
\\ 
BA   & 50 & 96& 0.997 &0.957&0.982
\\ 
\hline
\end{tabular*}
\end{center}
\end{table} 

\subsection{Solution Quality}
In this section, we first formulate the Integer Programming(IP) for MAI to obtain the optimal solution. We validate the effectiveness of GA on several small model networks,ER network, WS network and BA network. Subsequently, using the real networks from Table \ref{tab1}, we compare GA with baseline methods introduced in \ref{bas}, confirming the effectiveness of GA across different types of real networks. Finally, we analyze the runtime of GA on real networks.
\subsubsection{IP formulation for MAI}
Let $S$ be a solution for MAI, and $EP$ represent all pairs of edges in the network that can be rewired, each with a positive value. Given each edge pair $ep \in EP $, we define $x_{ep}$
\begin{equation*}
\begin{split}
x_{ep}=\left\{\begin{array}{lc}
1 \,\,\,\,\text{if}\, ep \in S\\
0 \,\,\,\,\text{otherwise.}\\
\end{array}\right.
\end{split}
\end{equation*}
The IP formulation is defined as follows:
\begin{equation*}
\begin{split}
&\max \,\,\sum_{ep \in EP}{value_{ep}x_{ep}} \\
&\text{s.t.}\quad  \left\{\begin{array}{lc}
\sum_{\{ep\in EP 
| (i,j) \in ep\}}
x_{ep} \leq 1 \,\,\,\text{for\,each} \, \,(i,j) \in E\\
\sum_{\{ep\in EP | (i,j) \in ep_{r}\}}
x_{ep}\leq 1 \,\,\,\text{for\,each}\,\, (i,j) \in E_{r}\\
\sum_{ep \in EP} x_{ep} \leq k
\\
x_{ep} \in \{0,1\} \,\,\,\text{for\,each}\,\,ep \in EP \\
\end{array}\right.
\end{split}
\end{equation*}
$E_r$ is a set of new edges generated after rewiring the elements in $EP$, and $ep_r$ represents the edge pair generated after rewiring $ep$. The first constraint ensures that each edge in the original network can only be rewired once. The second constraint ensures that each new edge is only generated once. 

We solved the above program by using the GLPK solver. In the experiment, we compared GA and the optimal solution calculated using IP. Our experiments are conducted on three popular model networks: ER network, WS network, and BA network. Since these networks are randomly generated, we repeat the experiments multiple times and average the results. In the experiments, we consider the rewiring frequency to be 5\% of the network edges. 

\begin{table}[htbp]
\begin{center}
\caption{Comparing GA and the optimal solution on three model networks. The first three columns denote the network type, number of nodes, and number of edges in the network. The fourth column represents the percentage of times GA obtains an optimal solution in multiple experiments. The fifth column indicates the minimum approximation ratio achieved by GA, while the sixth column presents the average approximation ratio.}
\label{tab2}
\begin{tabular*}{\hsize}{@{}@{\extracolsep{\fill}}cccccc@{}}
\hline
Network & $\lvert $V$ \rvert$ & $\lvert $E$ \rvert$  & OPT\% & Min Approx. &Ave Approx.
\\ \hline
ER   & 50 & 100& 42.5&0.960&0.960
\\ 
WS   & 50 & 100& 67.0 &0.924&0.990
\\ 
BA   & 50 & 96& 99.5 &0.994&0.999
\\ 
\hline
\end{tabular*}
\end{center}
\end{table} 


The results are reported in Table \ref{tab2}, where we display the percentage of optimal solutions achieved by GA, along with the minimum (i.e., worst-case) and average approximation ratios. The experiments clearly indicate that the minimum approximation ratio achieved by GA significantly outperforms theoretical values. In the BA network, GA obtains an optimal solution in over 99.5\%. Although in ER and WS networks, GA achieves an optimal solution in 42.5\% and 67.0\%, respectively, by observing their minimum and average approximation ratios, it is evident that even when GA does not achieve the optimal solution, it comes very close. For example, in the ER network, the minimum approximation ratio is 0.924, and the average approximation ratio is 0.990. For the three model networks mentioned above, the minimum approximation ratio is not less than 0.924, and the average approximation ratio is not less than 0.960, indicating that GA performs exceptionally well on model networks.

\begin{figure*}[ht]
	\centering
	\subfloat[AS-733-A]{
        \begin{minipage}[b]{.3\linewidth}
            \centering
            \includegraphics[scale=0.3]{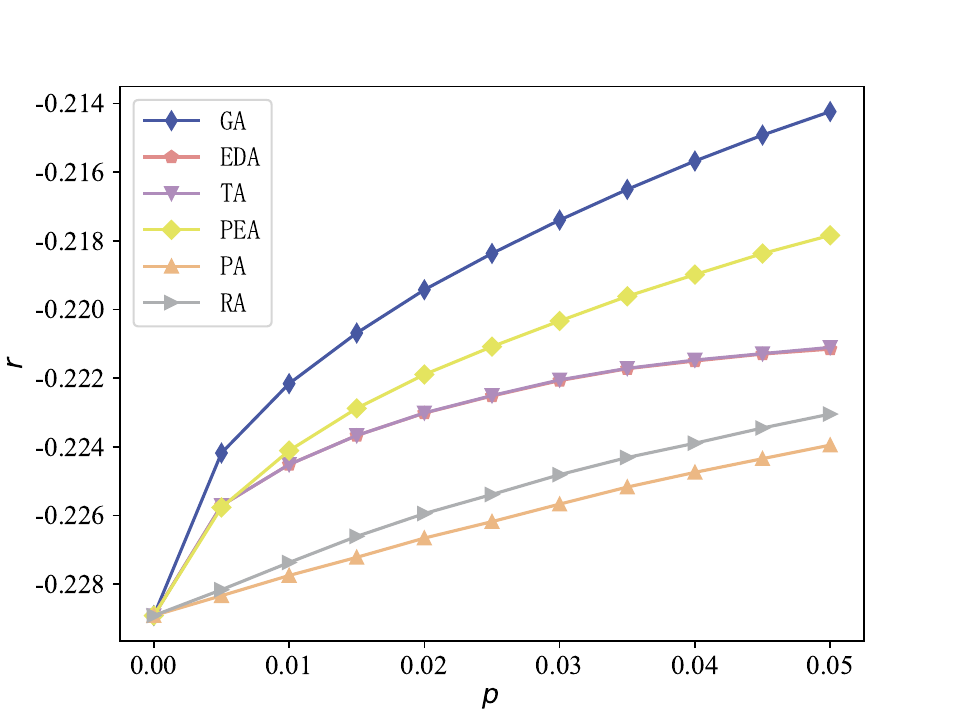}
        \end{minipage}
    }
    \subfloat[USPowerGrid]{
        \begin{minipage}[b]{.3\linewidth}
            \centering
            \includegraphics[scale=0.3]{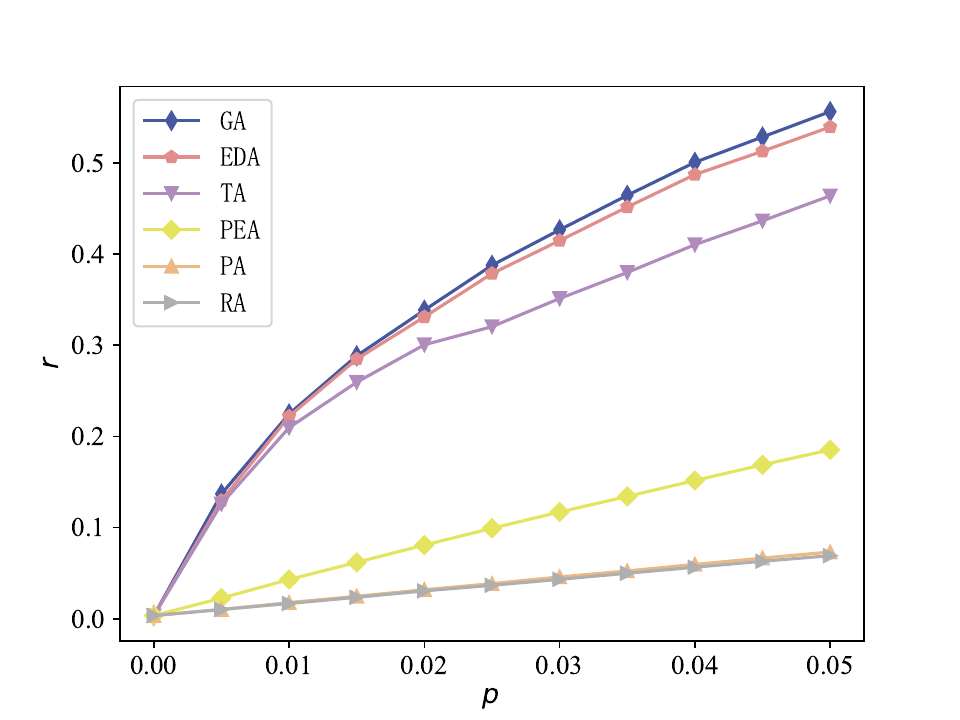}
        \end{minipage}
    }
	\subfloat[USAir97]{
        \begin{minipage}[b]{.3\linewidth}
            \centering
            \includegraphics[scale=0.3]{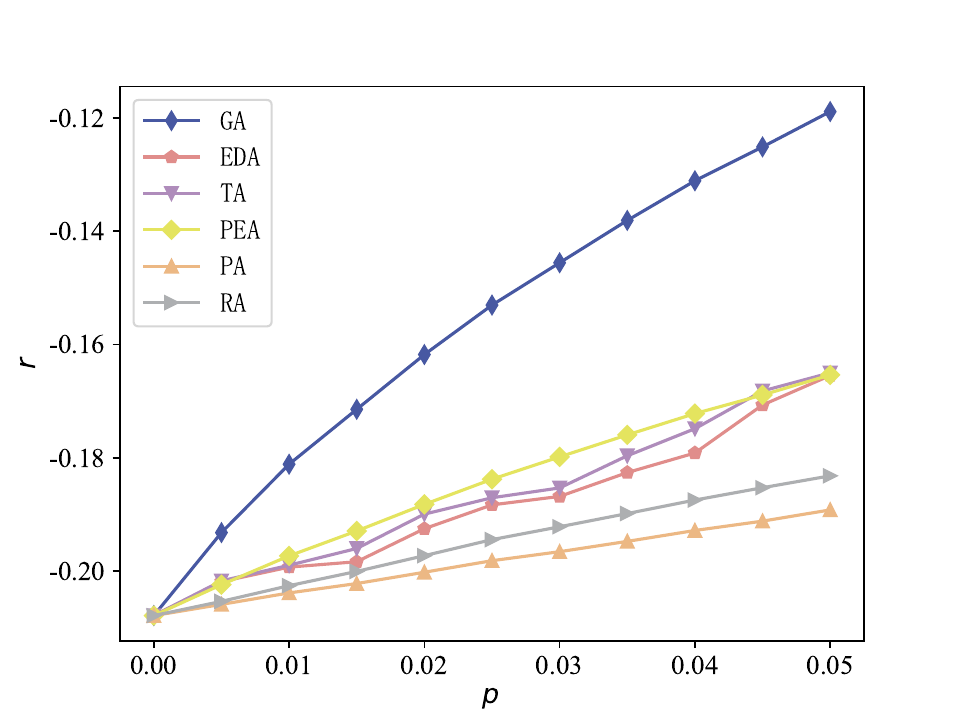}
        \end{minipage}
    }
  \vspace{-0.3cm}
	\\
	\subfloat[AS-733-E]{
        \begin{minipage}[b]{.3\linewidth}
            \centering
            \includegraphics[scale=0.3]{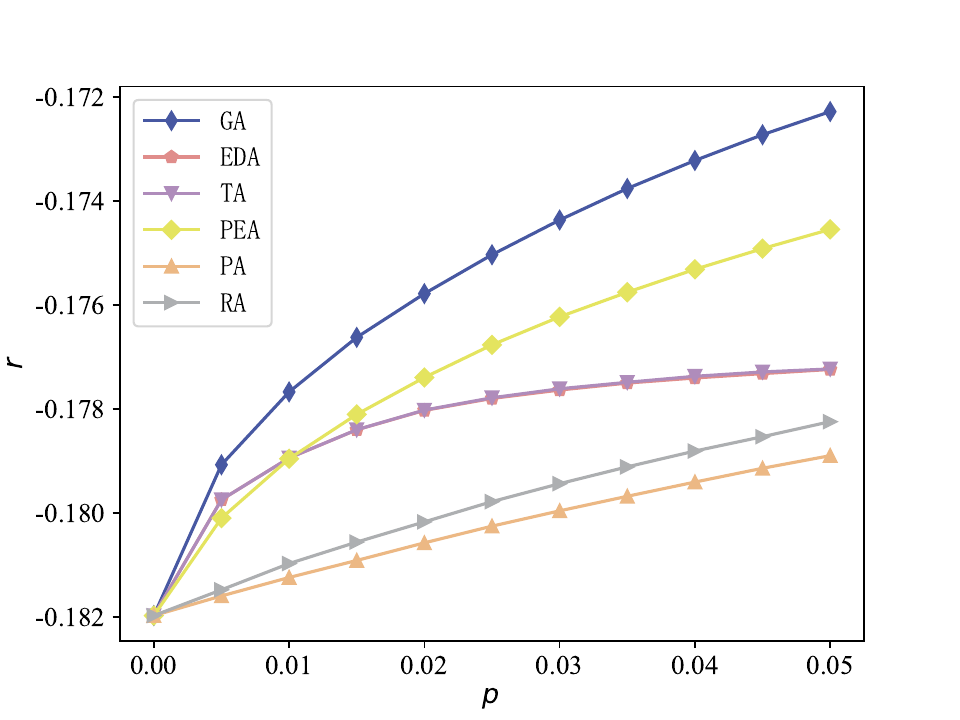}
        \end{minipage}
    }
	\subfloat[BCSPWR10]{
        \begin{minipage}[b]{.3\linewidth}
            \centering
            \includegraphics[scale=0.3]{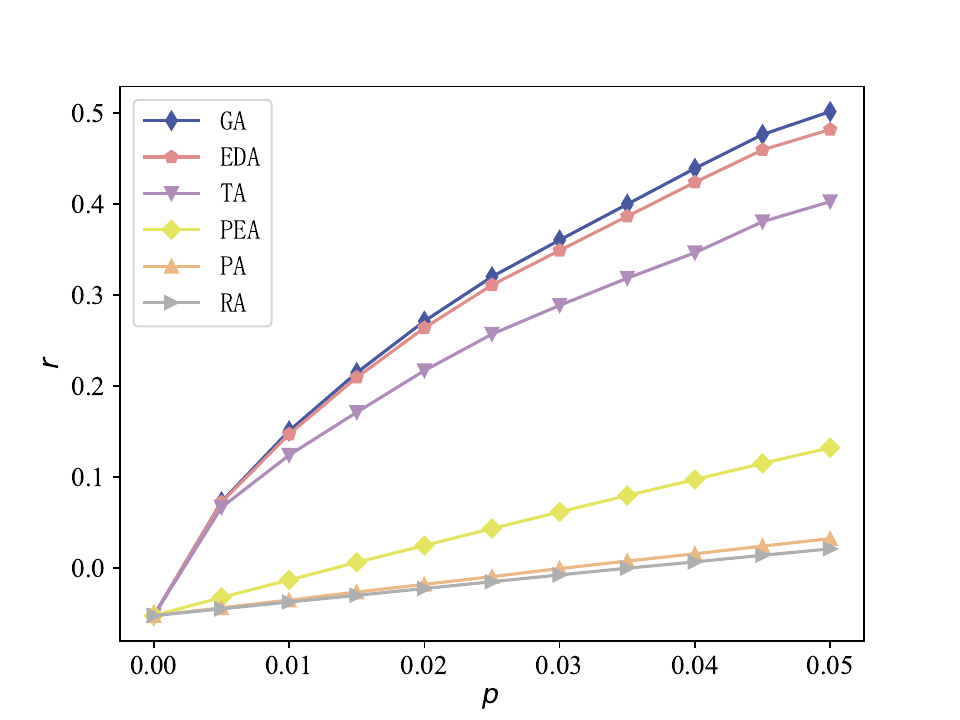}
        \end{minipage}
    }
    \subfloat[USAir10]{
        \begin{minipage}[b]{.3\linewidth}
            \centering
            \includegraphics[scale=0.3]{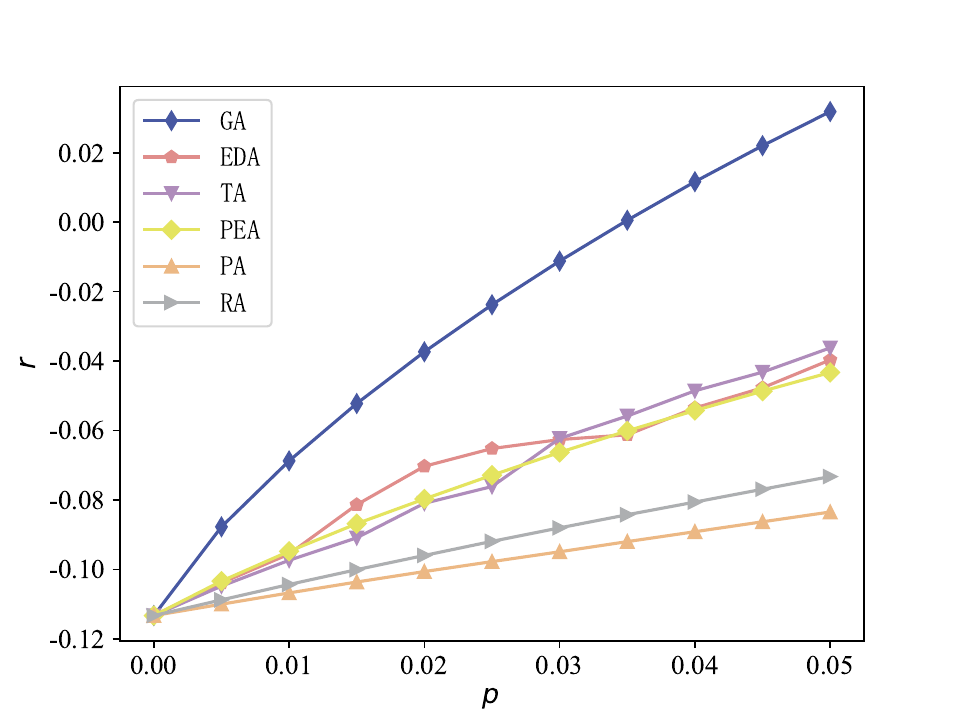}
        \end{minipage}	
    }
	\caption{The assortativity coefficient of the pivot as a function of the percentage $p$ of rewired edge pairs is examined using six methods.}
	\label{fig:1} 
\end{figure*}
\begin{figure*}[!t]
	\centering
	\subfloat[AS-733-E]{
        \begin{minipage}[b]{.3\linewidth}
            \centering
            \includegraphics[scale=0.3]{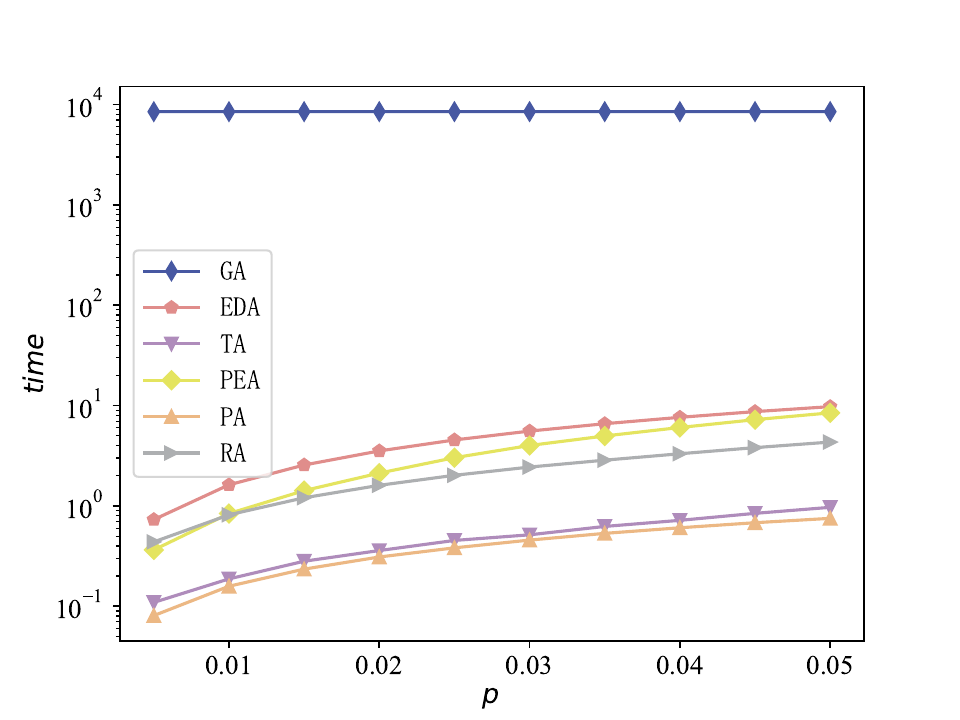}
        \end{minipage}
    }
	\subfloat[USPowerGrid]{
        \begin{minipage}[b]{.3\linewidth}
            \centering
            \includegraphics[scale=0.3]{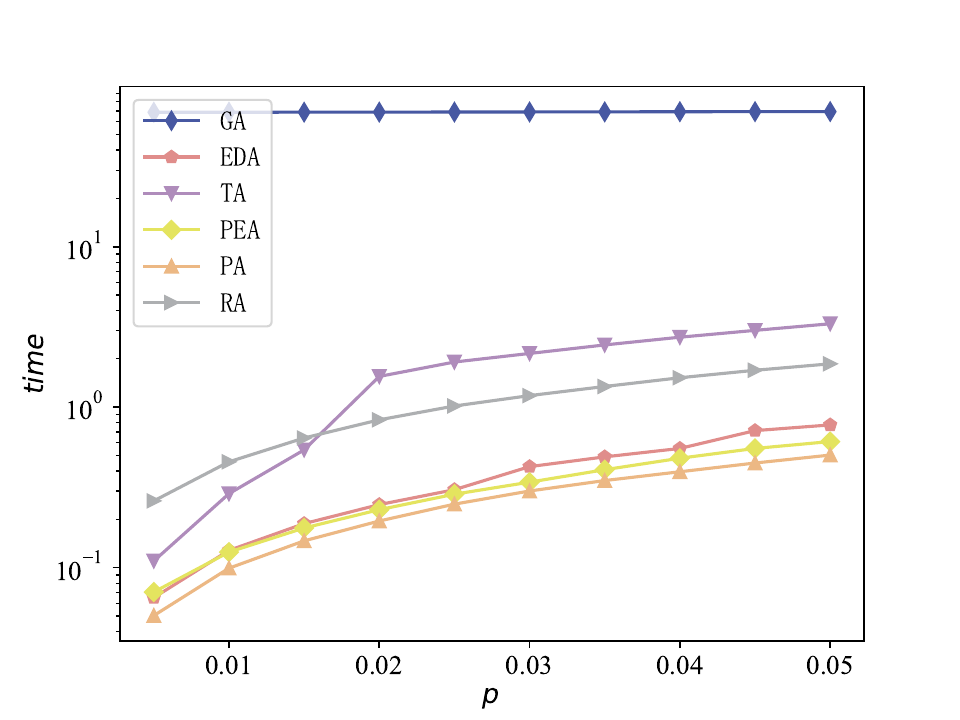}
        \end{minipage}
    }
    \subfloat[USAir97]{
        \begin{minipage}[b]{.3\linewidth}
            \centering
            \includegraphics[scale=0.3]{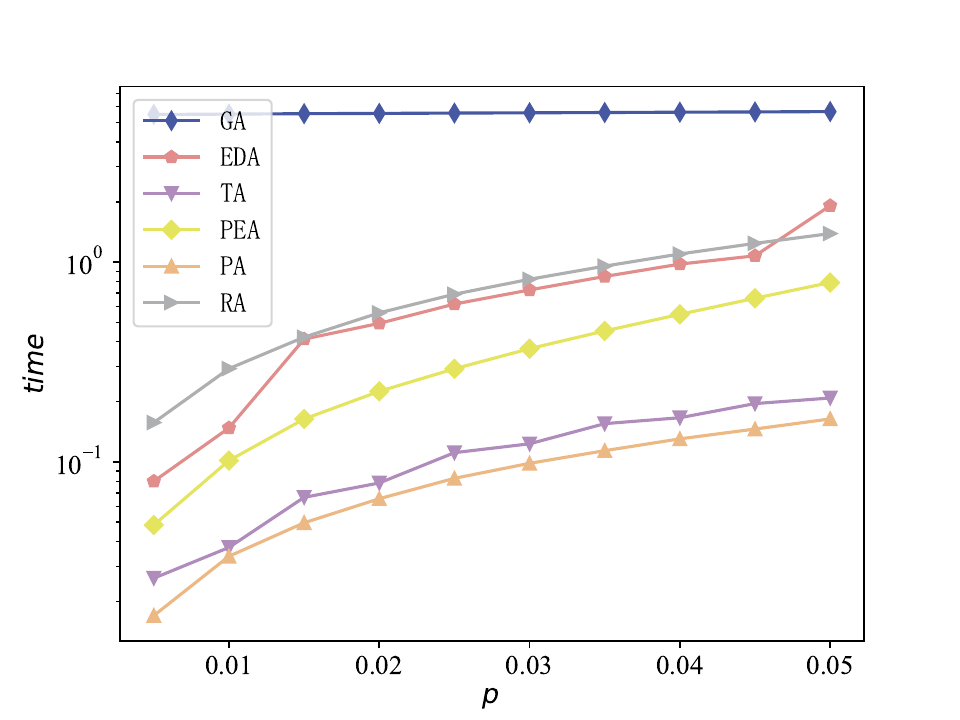}
        \end{minipage}
    }
	\caption{The running time of five heuristics is analyzed as a function of the percentage $p$ of rewired edge pairs.}
	\label{fig:2} 
\end{figure*}
\subsubsection{The Comparison with Alternative Baselines}
We compare our proposed GA method and heuristic methods with the baseline methods described in Sec \ref{sec} on the real networks presented in Table \ref{tab1}, validating the effectiveness of our algorithm on real networks.

\begin{table*}[htbp]
\begin{center}
\caption{When the number of rewired edge pairs in the network is 5\% of the total number of edges, the GA method and our proposed heuristic methods are compared with baseline methods for rewiring the assortativity coefficient of three types of real networks. The text in red font corresponds to the highest assortativity coefficient among the six methods, while the text in blue font corresponds to the second highest assortativity coefficient.}
\label{tab4567}
\begin{tabular}{cccccccccc}
\hline
 & & & & & & & & & \\[-6pt]
Methods &AS-733-A&AS-733-B&AS-733-C&AS-733-D&AS-733-E&USPowerGrid&BCSPWR10&USAir97&USAir10
\\ \hline
GA   & \textcolor{red}{-0.214} & \textcolor{red}{-0.198}& \textcolor{red}{-0.191}&\textcolor{red}{-0.178}&\textcolor{red}{-0.172}&\textcolor{red}{0.556}&\textcolor{red}{0.502}&\textcolor{red}{-0.119}&\textcolor{red}{0.032}
\\ 
EDA   & -0.221 & -0.204& -0.196 &-0.182&-0.177&\textcolor{blue}{0.539}&\textcolor{blue}{-0.175}&\textcolor{blue}{-0.165}&\textcolor{blue}{-0.031}
\\ 
TA   & -0.221 & -0.204& -0.196 &-0.182&-0.177&0.464&0.403&\textcolor{blue}{-0.165}&-0.036
\\ 
PEA   & \textcolor{blue}{-0.218} & \textcolor{blue}{-0.201}& \textcolor{blue}{-0.194} &\textcolor{blue}{-0.180}&\textcolor{blue}{-0.175}&0.185&0.132&\textcolor{blue}{-0.165}&-0.043
\\ 
PA   & -0.224 & -0.207& -0.198 &-0.185&-0.180&0.073&0.032&-0.189&-0.083
\\ 
RA   & -0.223 & -0.206& -0.198 &-0.184&-0.178&0.069&0.02&-0.183&-0.073
\\ 
\hline
\end{tabular}
\end{center}
\end{table*} 

\begin{table*}[htbp]
\begin{center}
\caption{When the number of rewired edge pairs in the network is 5\% of the total number of edges, the GA method and our proposed heuristic methods are compared with baseline methods for rewiring the Spearman rank correlation coefficient of three types of real networks. The text in red font corresponds to the highest Spearman rank correlation coefficient among the six methods, while the text in blue font corresponds to the second highest Spearman rank correlation coefficient.}
\label{tab8888}
\begin{tabular}{cccccccccc}
\hline
 & & & & & & & & & \\[-6pt]
Methods &AS-733-A&AS-733-B&AS-733-C&AS-733-D&AS-733-E&USPowerGrid&BCSPWR10&USAir97&USAir10
\\ \hline
original & -0.504 & -0.481& -0.502 &-0.521&-0.050&-0.074&-0.144&-0.144
&-0.066
\\ 
GA   & \textcolor{red}{-0.227} & \textcolor{red}{-0.196}& \textcolor{red}{-0.212}&\textcolor{red}{-0.211}&\textcolor{red}{-0.230}&\textcolor{red}{0.245}&\textcolor{red}{0.258}&\textcolor{red}{0.030}&\textcolor{red}{0.156}
\\ 
EDA   & \textcolor{blue}{-0.309} & \textcolor{blue}{-0.289}& \textcolor{blue}{-0.312} &\textcolor{blue}{-0.324}&\textcolor{blue}{-0.351}
&\textcolor{blue}{0.223}&\textcolor{blue}{0.240}&\textcolor{blue}{-0.052}&0.054
\\ 
TA   & -0.310 & \textcolor{blue}{-0.289}& \textcolor{blue}{-0.312} &-0.326&0.352&0.100&0.098&\textcolor{blue}{-0.052}&\textcolor{blue}{0.059}
\\ 
PEA   & -0.368 & -0.347& -0.366 &-0.367&-0.384&0.112&0.094&-0.070&0.028
\\ 
PA   & -0.428 & -0.407& -0.425 &-0.426&-0.445&0.042&0.027&-0.110&-0.024
\\ 
RA   & -0.407 & -0.387& -0.405 &-0.407&-0.424 &0.039&0.015&-0.098&-0.009
\\ 
\hline
\end{tabular}
\end{center}
\end{table*} 

To ensure the validity of the experiments, we repeated the experiments 50 times on real networks for methods with uncertain results, such as RA, and averaged the results. Table \ref{tab4567} displays the assortativity coefficients of the real networks after rewiring by our GA method and heuristic methods, compared to baseline methods, when the rewiring budget is 5\% of the total number of edges in the network. The GA method consistently achieves the best results across all three types of networks, while our proposed heuristic methods EDA, TA, and PEA also outperform the baseline methods on all networks. We observe that the performance of the three heuristic methods varies across different types of networks. In the routing network, the performance of PEA is second only to the GA method. In the power network, EDA and TA perform well, especially EDA, which closely matches the increase in network assortativity coefficients achieved by the GA method.  In the flight network, our three heuristic methods show similar effectiveness. Notably, EDA and TA demonstrate similar effects across all three types of networks. This suggests that although our EDA and TA methods employ different strategies for rewiring edge pairs, they tend to select similar edge pairs for rewiring. One possible explanation is that the TA method prioritizes rewiring edge pairs involving high-degree nodes, similar to the edge pairs with large degree differences targeted by the EDA method. This phenomenon is particularly prominent in disassortative real networks. 

Another noteworthy phenomenon emerges when considering neutral networks:  for neutral networks, our methods exhibit a significant improvement in the network assortativity coefficient. For instance, in the power network, the GA method increases the assortativity coefficients of USPowerGrid and BCSPWR10 by 0.553 and 0.507, respectively. This transformation effectively changes them from neutral networks into strongly assortative networks. In contrast, for disassortative scale-free networks, even the improvement in the assortativity coefficient achieved by the GA method is limited. For example, in AS-733-A and AS-733-E, the GA method increases their assortativity coefficients by only 0.015 and 0.010, respectively. The reason behind this phenomenon lies in the influence of network degree distribution on the value of the assortativity coefficient.  Scale-free networks with degree exponent $\gamma < 3$ tend to exhibit structural disassortativity \cite{boguna2004cut}(e.g., $\gamma_{AS-733-A}=2.20$, $\gamma_{AS-733-E}=2.11$), indicating the presence of multiple edges between high-degree nodes. However, due to the limitation of being a simple network with only one edge between nodes, the network tends to be disassortative. Additionally, the range within which the network's assortativity coefficient can vary is relatively small. Although rewiring effectively changes the network's structure, , these changes may not be prominently reflected in the assortativity coefficient.

We can evaluate the degree correlation of networks demonstrating structural disassortativity using the Spearman rank correlation coefficient \cite{zhang2016measuring}. In Sec. \ref{rocm}, the calculation of the Spearman rank correlation coefficient for centrality measures is described to assess their robustness. Here, we calculate the Spearman rank correlation coefficient based on node degrees to measure the degree correlation of the network. The Spearman rank correlation coefficient utilizes the rankings of node degrees instead of their actual degrees, thereby reducing the influence of degree distribution on the assortativity coefficient. It is evident from Table \ref{tab8888} that the Spearman rank correlation coefficient effectively captures the degree of change in degree correlation in disassortative scale-free networks. For example, in AS-733-A, the GA method increases the network's Spearman rank correlation coefficient by 0.227. Furthermore, while PEA demonstrates superior performance to EDA and TA in terms of the assortativity coefficient, EDA and TA outperform PEA when considering the Spearman rank correlation coefficient in certain networks. This indicates that the Spearman rank correlation coefficient, which considers the rankings of node degrees, may not always align well with the assortativity coefficient.

Figure \ref{fig:1} depicts the assortativity coefficient variations of the network under different methods for rewiring budgets ranging from 0.5\% to 5\% of the number of network edges. The trends observed in the routing network are similar, thus, we present a subset of networks here. We can clearly see that the GA method yields the best results. Across all routing networks, different methods exhibit similar effects, with GA being the most effective, followed by PEA, while EDA and TA show comparable performance, and PA and RA methods are the least effective. Similar observations can be made for the power networks, although PEA and TA significantly outperform EDA. In the power networks, our heuristic methods, PEA and TA, show improvements in assortativity coefficients that are very close to those achieved by the GA method, especially the EDA method. In flight network, the performance of the three methods we proposed is similar, with only slight variations. Specifically, in USAir97, PEA is slightly better than EDA and TA, while in USAir10, EDA and TA are slightly better than PEA.

Next, we conduct an analysis of the time efficiency of our GA method and the heuristic methods in comparison to baseline methods. The Figure \ref{fig:2} illustrates the runtime of different methods across three types of networks as the number of rewirings ranges from 0.05\% to 5\% of the total number of edges in the network. We observe that the time efficiency of the GA method is notably lower, differing by several orders of magnitude from the other methods. Additionally, as the network scale increases, the time cost of the GA method sharply rises. It is noteworthy that our GA only performs one initial sorting of the $value$ for all possible edge pairs with positive $value$, so the number of rewirings typically does not significantly affect its runtime. The runtime for the EDA, TA, and PEA methods is similar to that of baseline methods, and in some networks, it even outperforms baseline methods. Therefore, in conjunction with the preceding experiments, our proposed heuristic methods demonstrate a clear advantage over baseline methods and effectively increase the assortativity coefficient of networks. This suggests that when the network scale is large and GA is impractical, EDA, TA, and PEA can be flexibly employed based on the network type. For example, in power networks, EDA and TA are favored, whereas PEA is better suited for router networks.

\begin{figure*}[htbp]
	\centering
	\subfloat[AS-733-A]{
        \begin{minipage}[b]{.3\linewidth}
            \centering
            \includegraphics[scale=0.3]{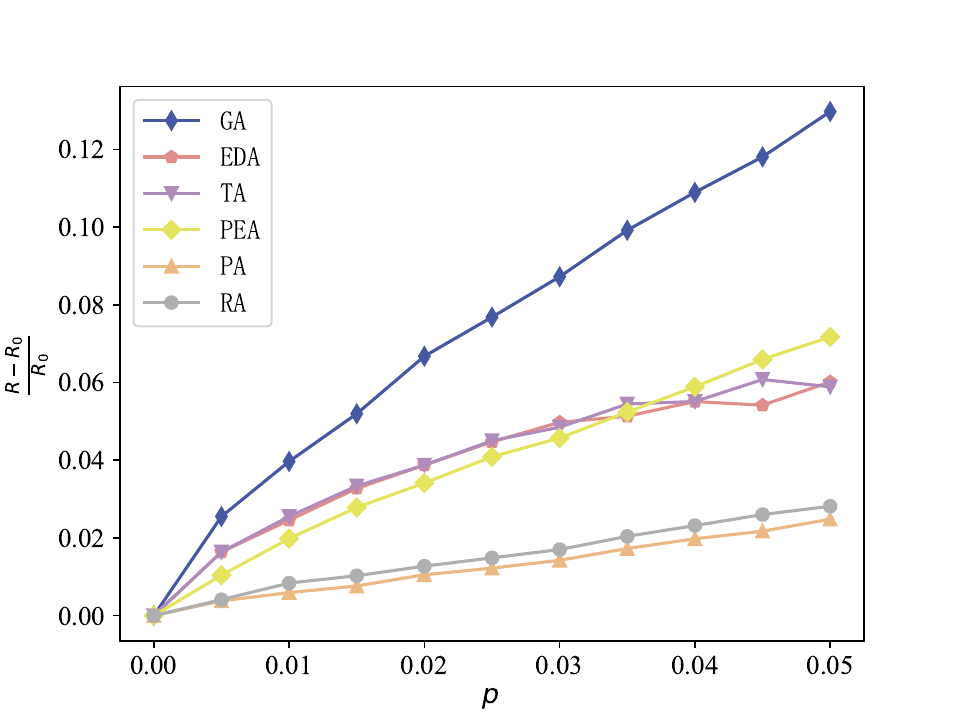}
        \end{minipage}
    }
    \subfloat[USPowerGrid]{
        \begin{minipage}[b]{.3\linewidth}
            \centering
            \includegraphics[scale=0.3]{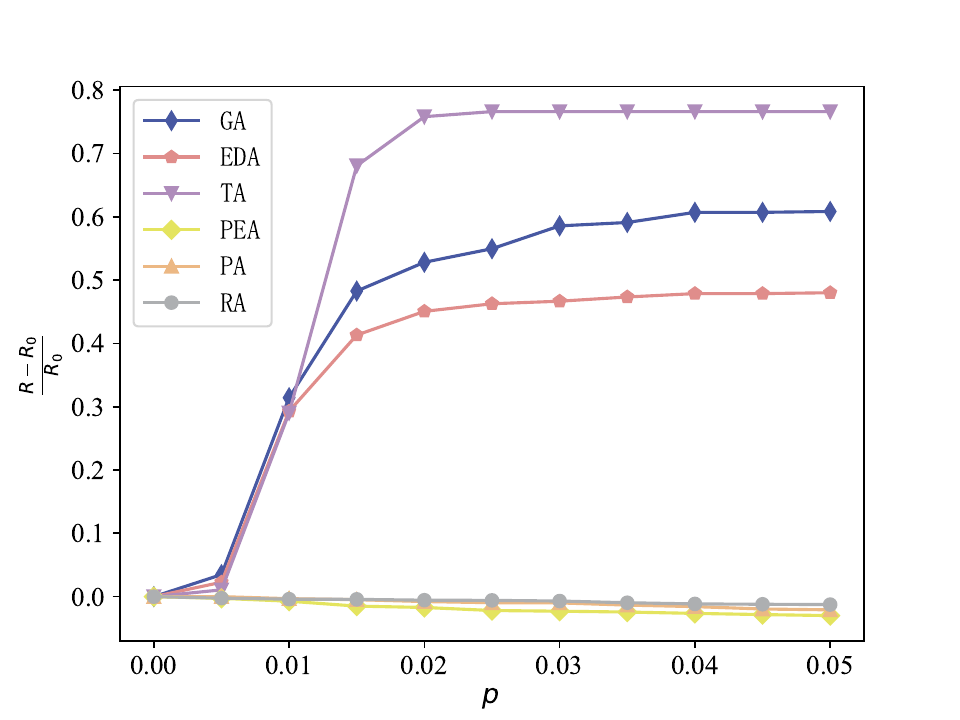}
        \end{minipage}
    }
	\subfloat[USAir97]{
        \begin{minipage}[b]{.3\linewidth}
            \centering
            \includegraphics[scale=0.3]{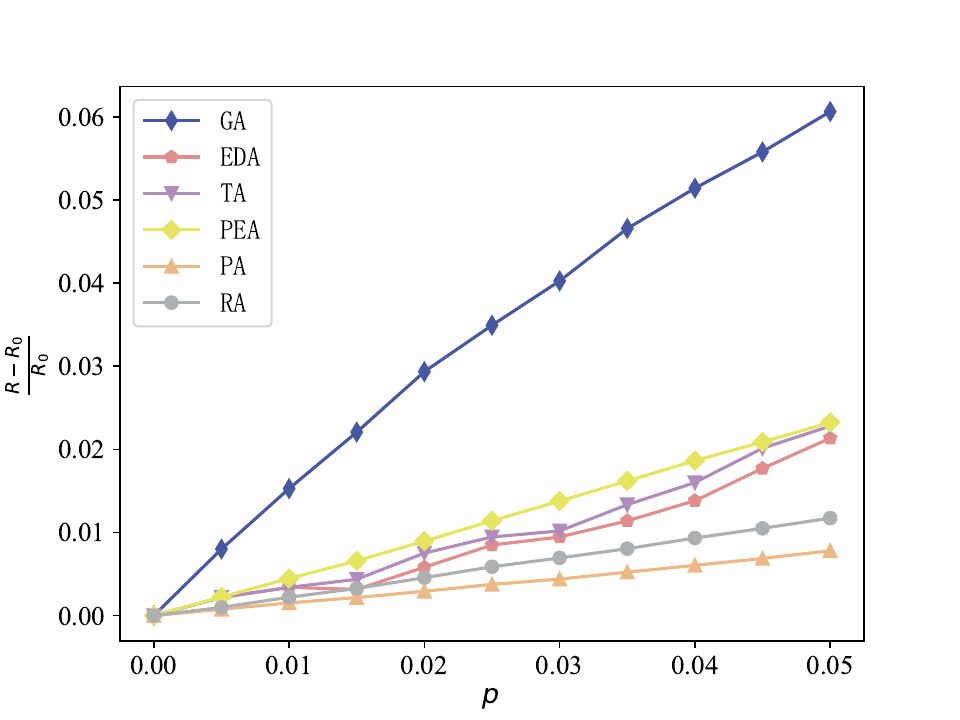}
        \end{minipage}
    }
  \vspace{-0.3cm}
	\\
	\subfloat[AS-733-E]{
        \begin{minipage}[b]{.3\linewidth}
            \centering
            \includegraphics[scale=0.3]{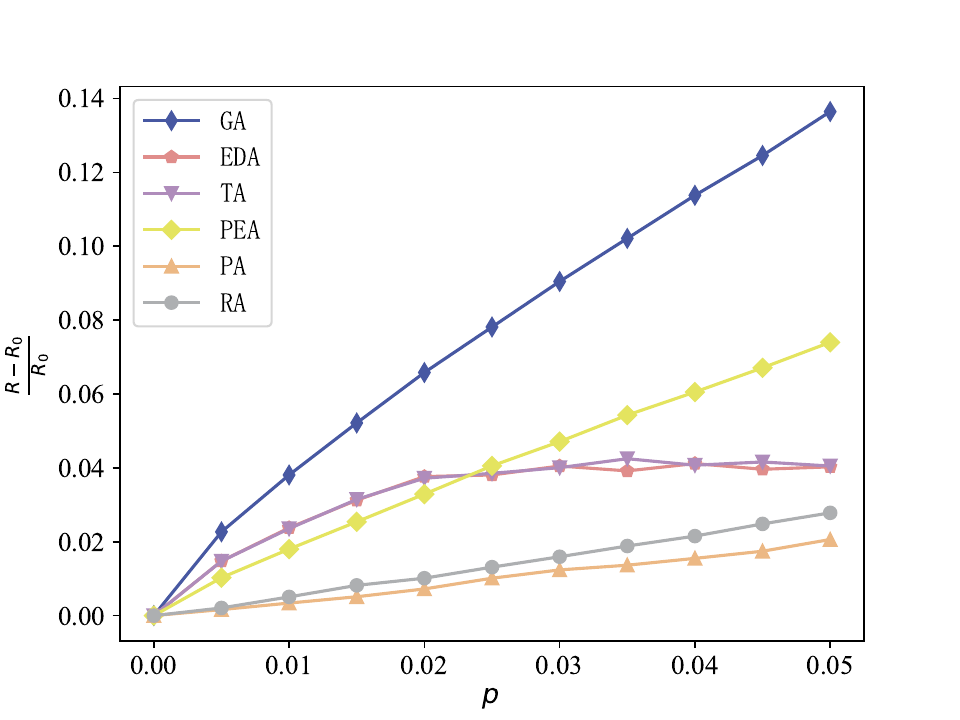}
        \end{minipage}
    }
	\subfloat[BCSPWR10]{
        \begin{minipage}[b]{.3\linewidth}
            \centering
            \includegraphics[scale=0.3]{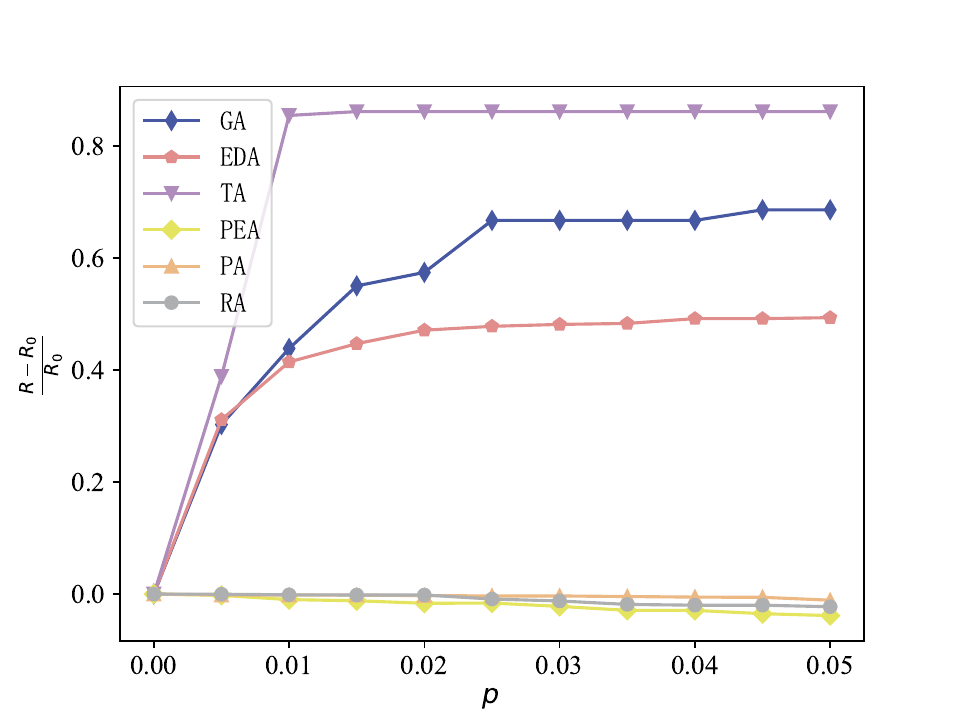}
        \end{minipage}
    }
    \subfloat[USAir10]{
        \begin{minipage}[b]{.3\linewidth}
            \centering
            \includegraphics[scale=0.3]{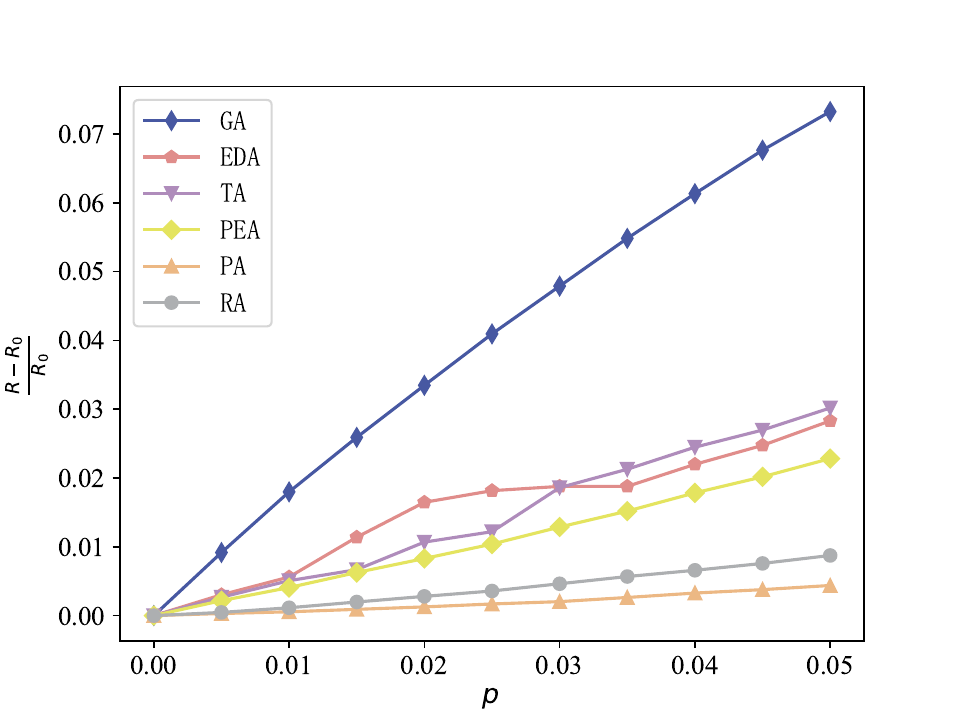}
        \end{minipage}	
    }
	\caption{The spectral radius of five heuristics is analyzed as a function of the percentage $p$ of rewired edge pairs.}
	\label{fig:9} 
\end{figure*}
\begin{figure*}[htbp]
	\centering
	\subfloat[AS-733-A]{
        \begin{minipage}[b]{.3\linewidth}
            \centering
            \includegraphics[scale=0.245]{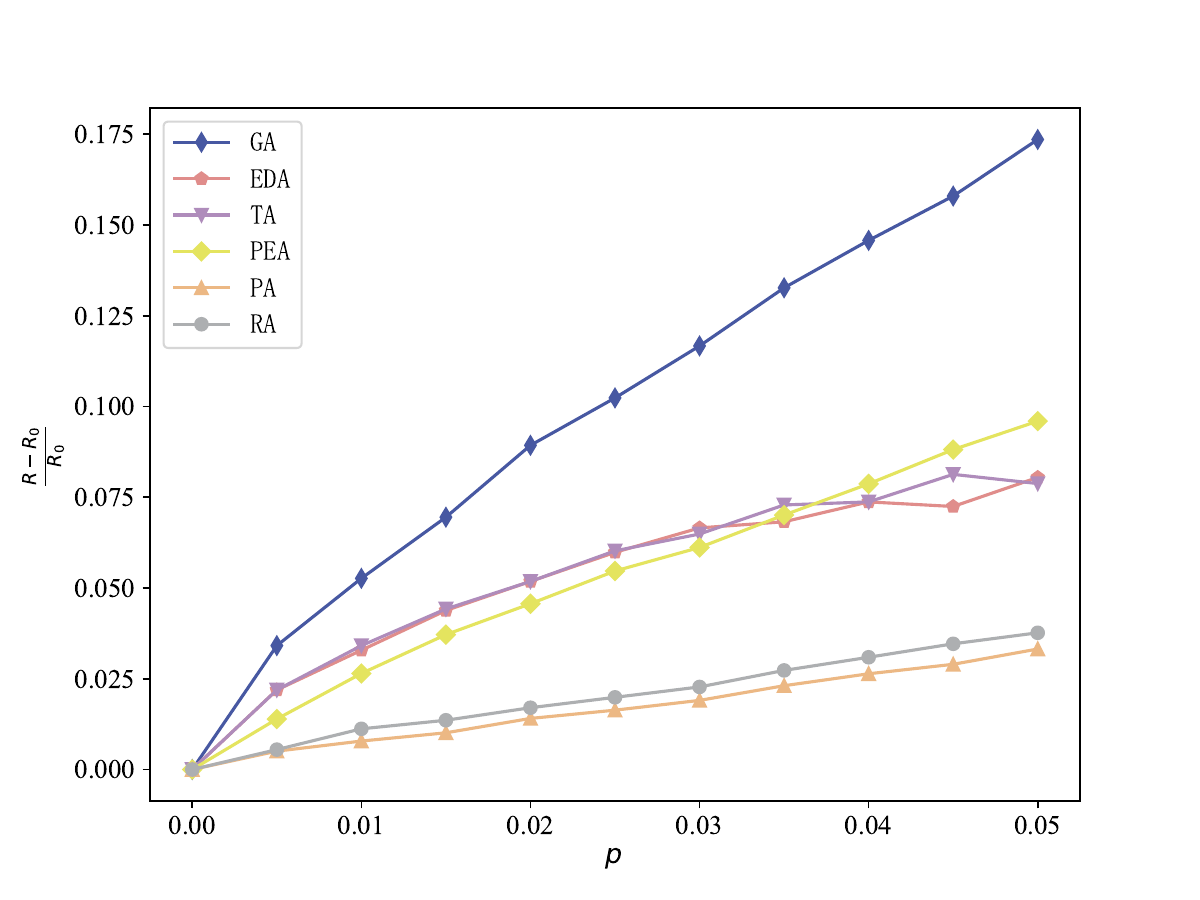}
        \end{minipage}
        }
        \subfloat[USPowerGrid]{
            \begin{minipage}[b]{.3\linewidth}
                \centering
                \includegraphics[scale=0.3]{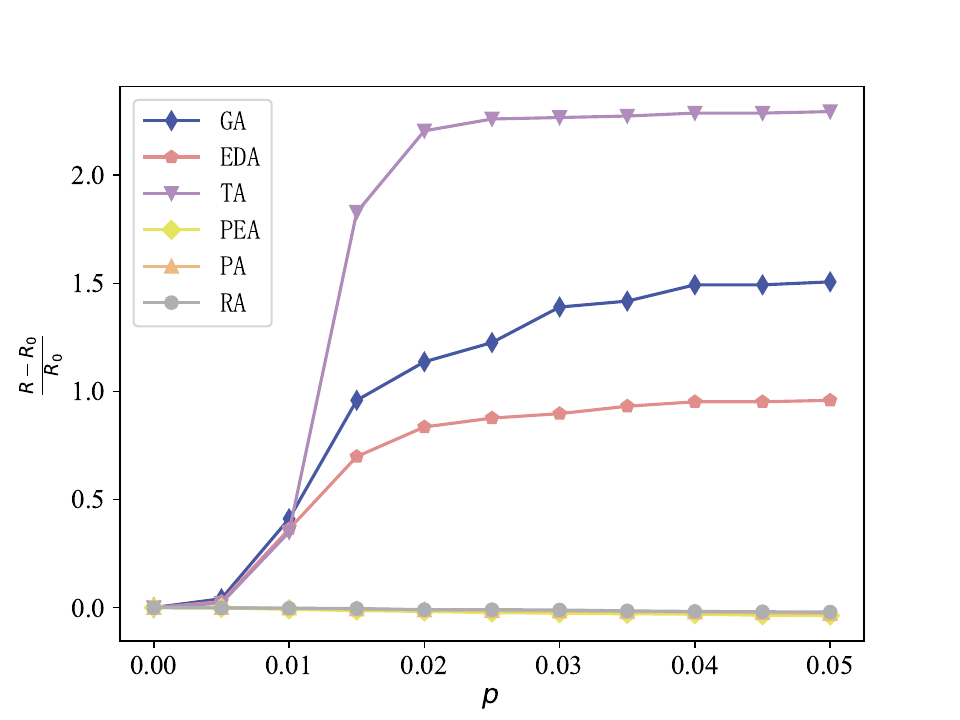}
            \end{minipage}
        }
        \subfloat[USAir97]{
            \begin{minipage}[b]{.3\linewidth}
                \centering
                \includegraphics[scale=0.3]{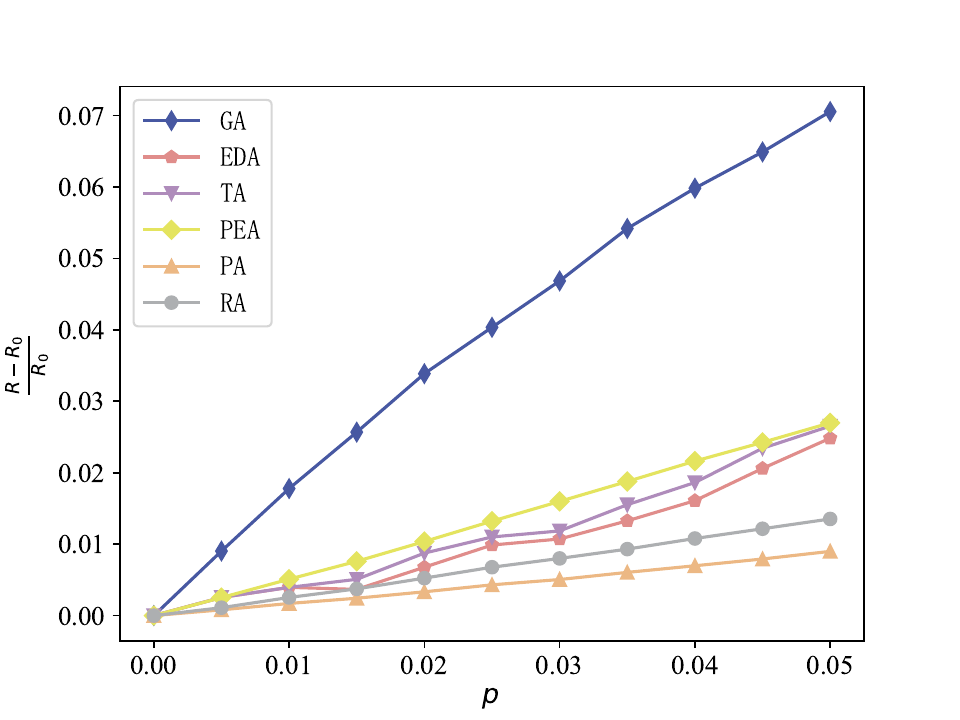}
            \end{minipage}
        }
  \vspace{-0.3cm}
  \\


    \subfloat[AS-733-E]{
        \begin{minipage}[b]{.3\linewidth}
            \centering
            \includegraphics[scale=0.245]{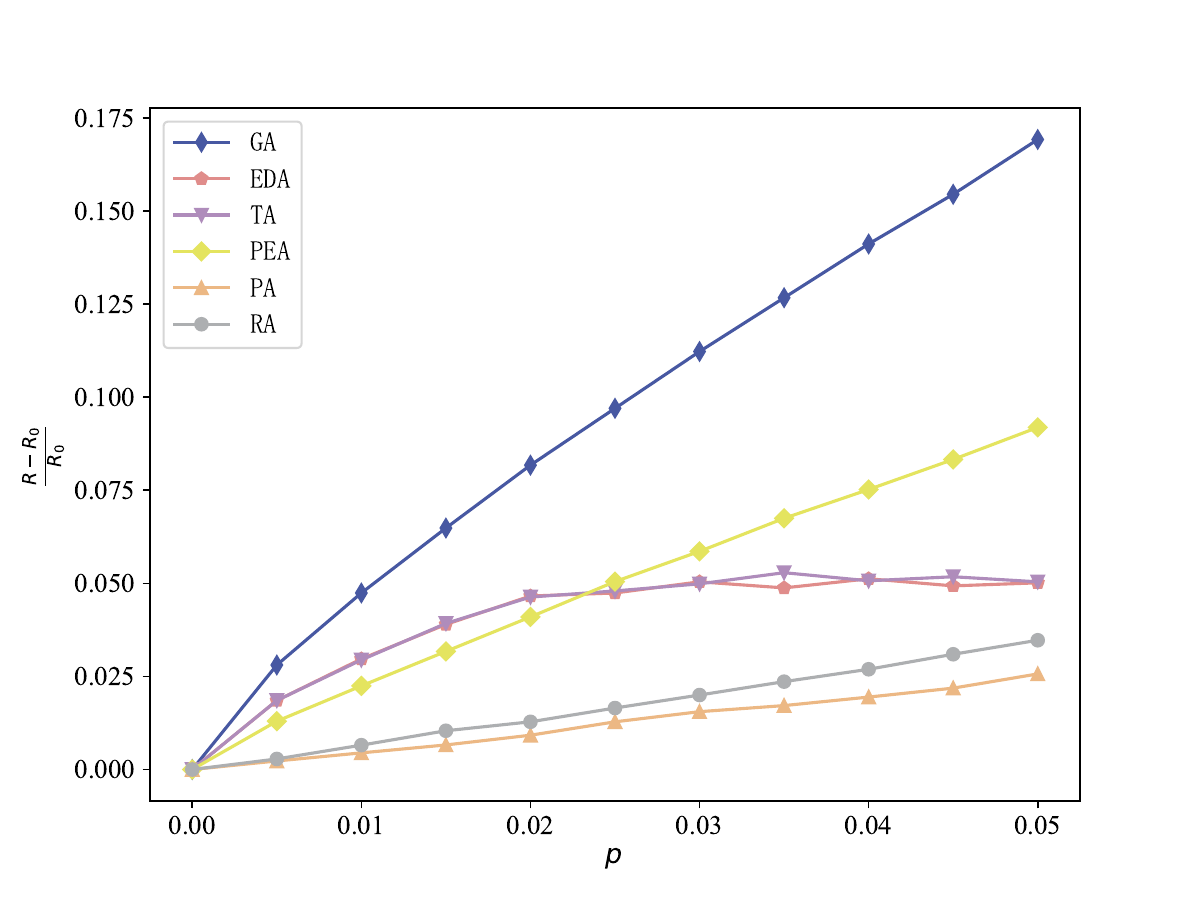}
        \end{minipage}
    }
    \subfloat[BCSPWR10]{
        \begin{minipage}[b]{.3\linewidth}
            \centering
            \includegraphics[scale=0.3]{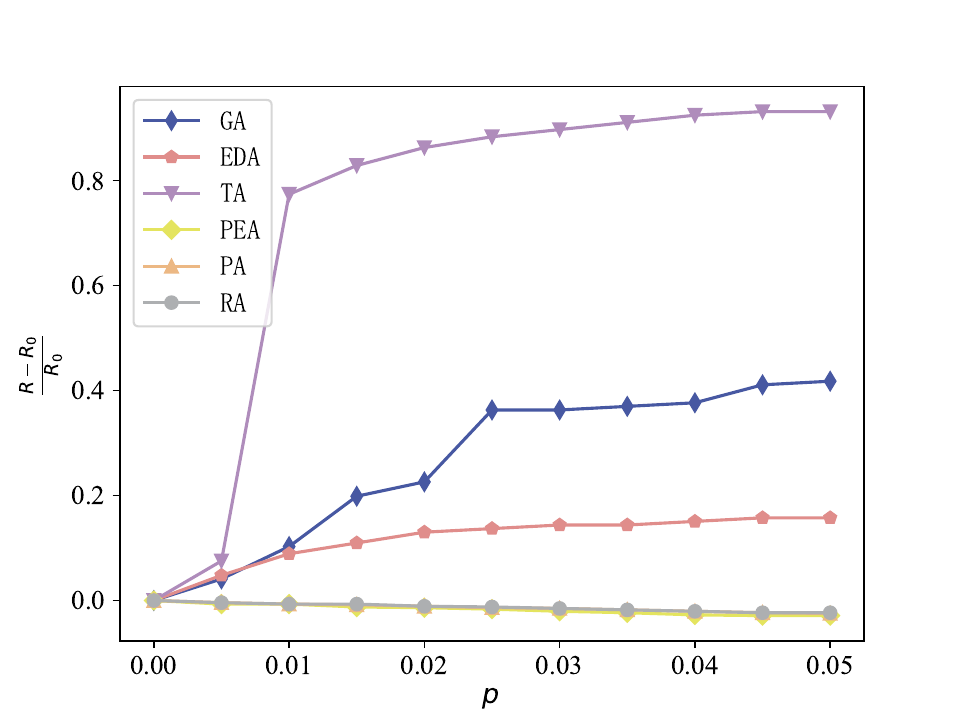}
        \end{minipage}
    }
    \subfloat[USAir10]{
        \begin{minipage}[b]{.3\linewidth}
            \centering
            \includegraphics[scale=0.3]{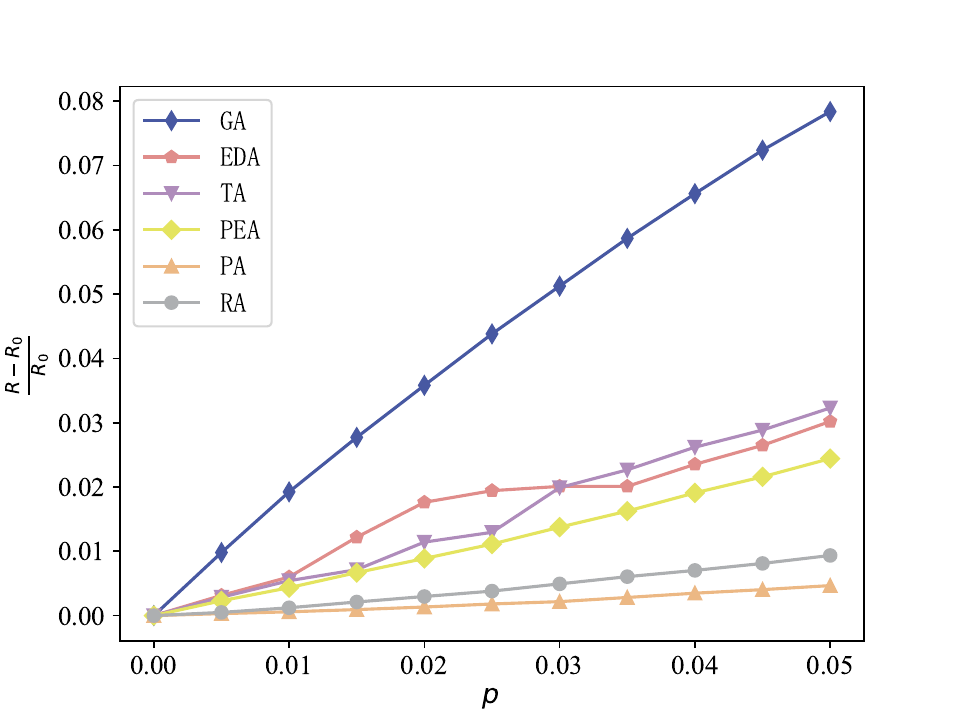}
        \end{minipage}	
    }
	\caption{The natural connectivity  of five heuristics is analyzed as a function of the percentage $p$ of rewired edge pairs.}
	\label{fig:12} 
\end{figure*}
\subsection{The Analysis of Network Robustness}
In this section, we analyze the impact of the GA method and the heuristic methods on network robustness by selecting several representative measures, as described in Section \ref{nr}. We compare the changes in these robustness measures before and after executing the rewiring methods, considering a rewiring budget ranging from 0.5\% to 5\% of the number of network edges.


Figure \ref{fig:9} illustrates the variation of the spectral radius under different rewiring methods. We use $\frac{R-R_0}{R_0}$ as the vertical axis to represent the corresponding change rate in robustness metrics. Similarly, Figures \ref{fig:12} shows the changes in natural connectivity under different rewiring methods. 

According to the definitions of the two spectral robustness metrics, it can be observed that they are all directly related to the largest eigenvalue of the network's adjacency matrix. Increasing the network's assortativity coefficient typically leads to an increase in the largest eigenvalue of the network, thereby enhancing the robustness metrics associated with the largest eigenvalue. Figures \ref{fig:9} and \ref{fig:12} demonstrate that the variation trend of the spectral radius and the natural connectivity under different rewiring methods in routing and flight networks is similar to that of the assortativity coefficient. Specifically, the rewiring methods that are more effective in increasing the network's assortativity coefficient also tend to effectively increase the network's spectral radius and natural connectivity in these two types of networks. While the relationship between the assortativity coefficient and the largest eigenvalue is not straightforward, particularly in power networks, some interesting observations emerge.  For instance, in power networks, the GA method proves most effective in increasing the network assortativity, whereas TA emerges as the most effective method for enhancing the network's spectral radius. Moreover, EDA, TA, and GA methods initially lead to a rapid increase in the network's spectral radius with an uptick in rewiring frequency, stabilizing once the rewiring frequency surpasses 2.5\% of the total number of edges, with no further increase observed with additional rewiring. Additionally, despite RA, PA, and PEA's capacity to augment the network's assortativity coefficient, they do not contribute to improvements in the network's spectral radius and natural connectivity.

Observing Figures \ref{fig:9} and \ref{fig:12} reveals an interesting phenomenon: the variations in the natural connectivity of different network types under different rewiring methods resemble those of their spectral radius. One possible explanation is that natural connectivity represents the weighted average of all eigenvalues of the network adjacency matrix, with the maximum eigenvalue being predominant, thereby resulting in similar variations in spectral radius and natural connectivity.


Furthermore, we noted that the stability of the two robustness metrics varies across networks of different types. For example, in the AS router network and the flight network, when the rewiring ratio is 5\%, the increase in the spectral radius is 12\% and 14\% in the AS router network, and 6.7\% and 17.9\% in the flight network, respectively. However, in the power network, the increase in the spectral radius reaches as high as 78\% and 86\%. Similar phenomena are also observed in natural connectivity.

Overall, GA effectively improves the spectral robustness metrics of the three types of networks, with particularly notable performance in the router network and flight network compared to other rewiring strategies. Our three heuristic methods perform well in both routing and flight networks, with TA and EDA also proving effective for the power network. Notably, in the power network, TA outperforms GA.
It is worth noting that our rewiring strategy does not require the calculation of network robustness metrics at each rewiring step. Even spectral-based robustness metrics are computationally expensive, especially for large-scale networks. Therefore, our rewiring strategy demonstrates significant time efficiency.


\subsection{Robustness of centrality measures}

\begin{figure*}[htbp]
	\centering
	\subfloat[AS-733-A]{
        \begin{minipage}[b]{.3\linewidth}
            \centering
            \includegraphics[scale=0.3]{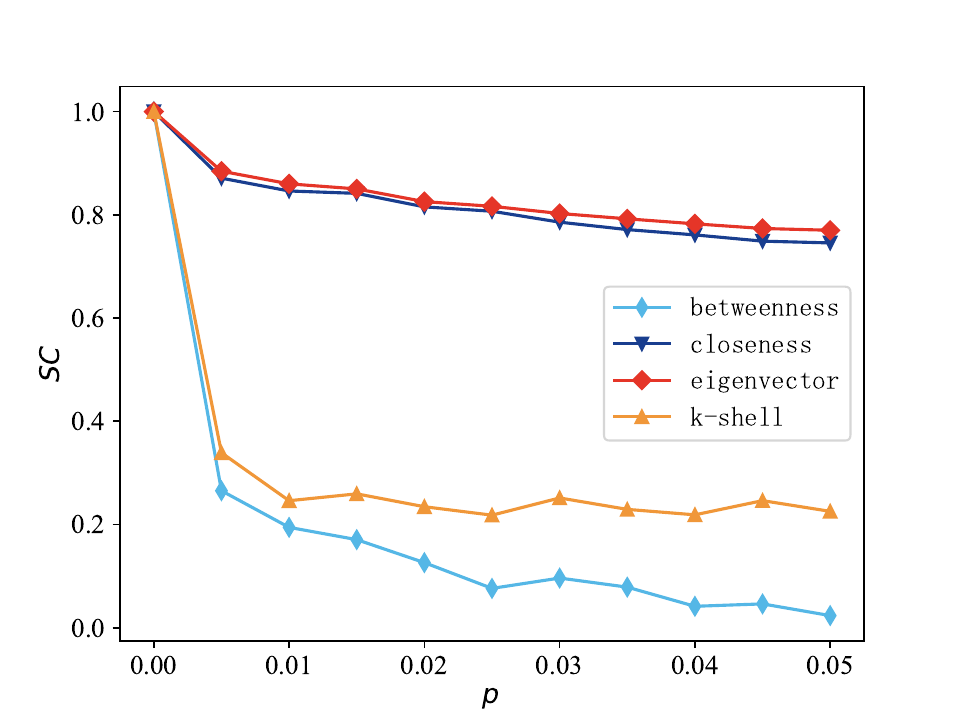}
        \end{minipage}
    }
    \subfloat[USPowerGrid]{
        \begin{minipage}[b]{.3\linewidth}
            \centering
            \includegraphics[scale=0.3]{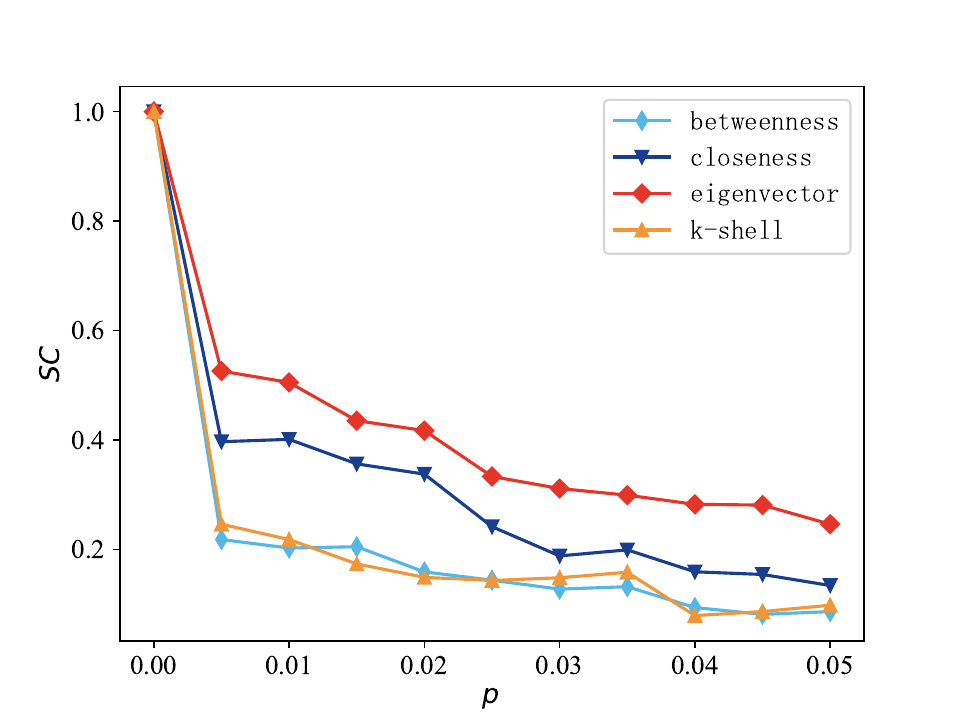}
        \end{minipage}
    }
    	\subfloat[USAir97]{
        \begin{minipage}[b]{.3\linewidth}
            \centering
            \includegraphics[scale=0.3]{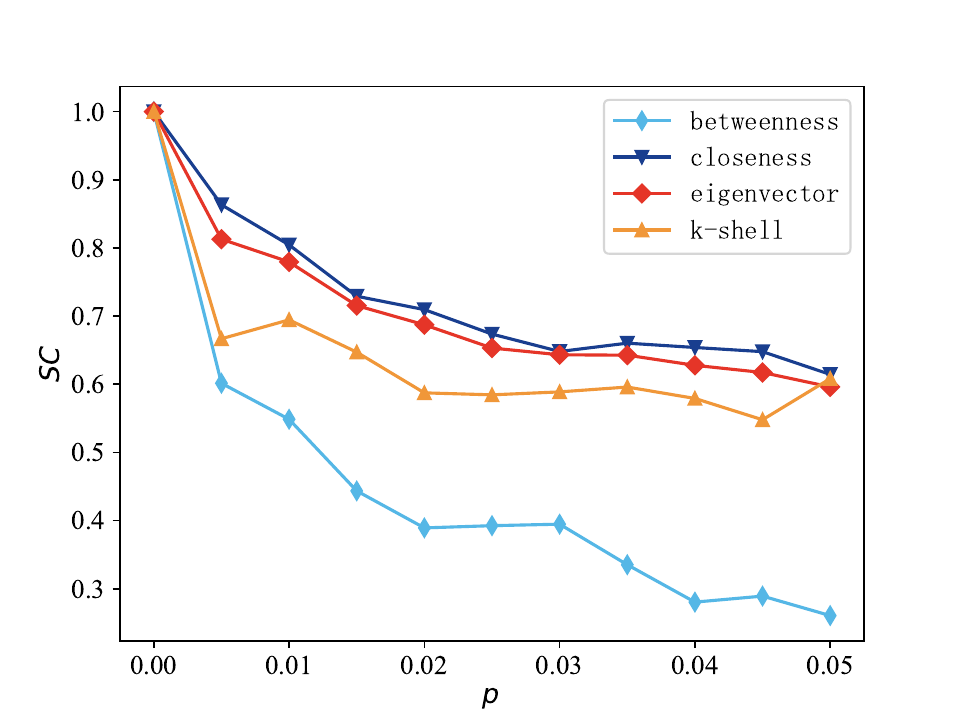}
        \end{minipage}
    }
  \vspace{-0.3cm}
	\\
 	\subfloat[AS-733-E]{
        \begin{minipage}[b]{.3\linewidth}
            \centering
            \includegraphics[scale=0.3]{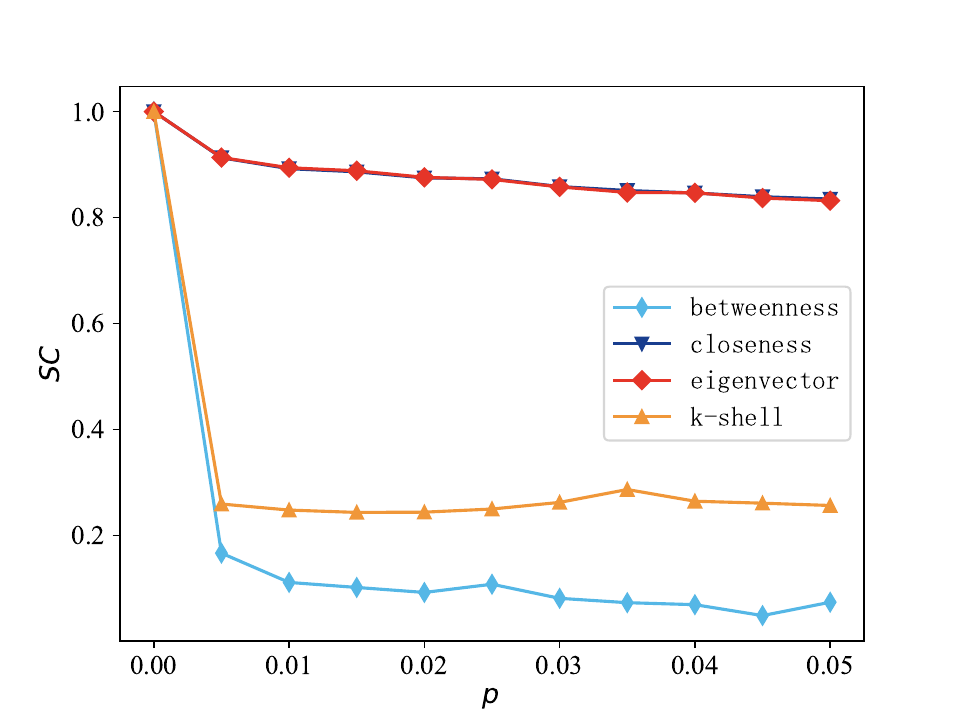}
        \end{minipage}
    }
	\subfloat[BCSPWR10]{
        \begin{minipage}[b]{.3\linewidth}
            \centering
            \includegraphics[scale=0.3]{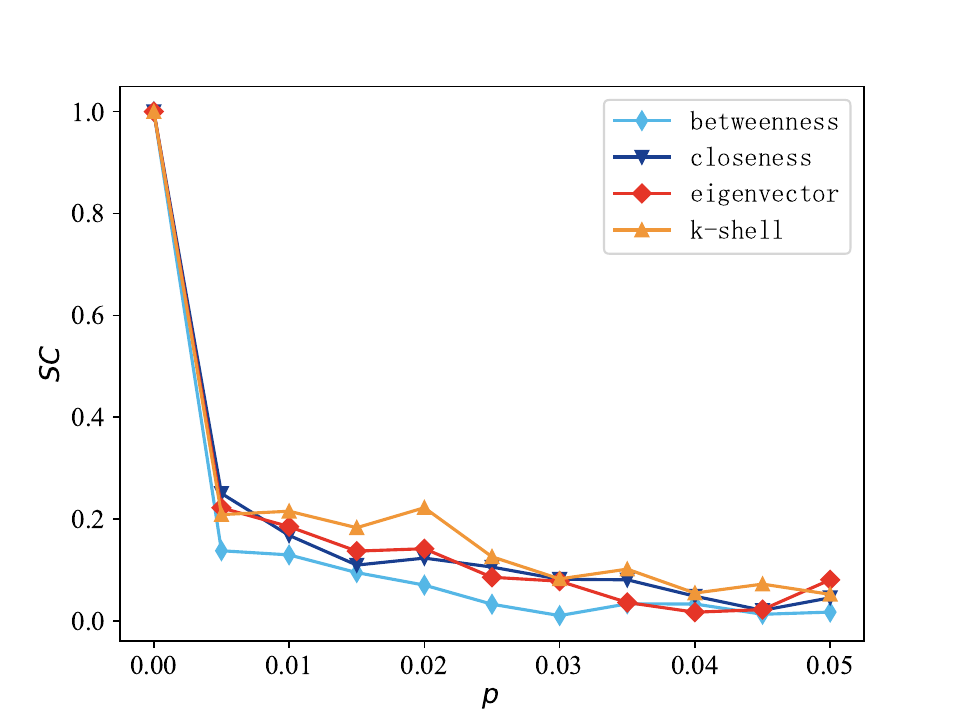}
        \end{minipage}
    }
    \subfloat[USAir10]{
        \begin{minipage}[b]{.3\linewidth}
            \centering
            \includegraphics[scale=0.3]{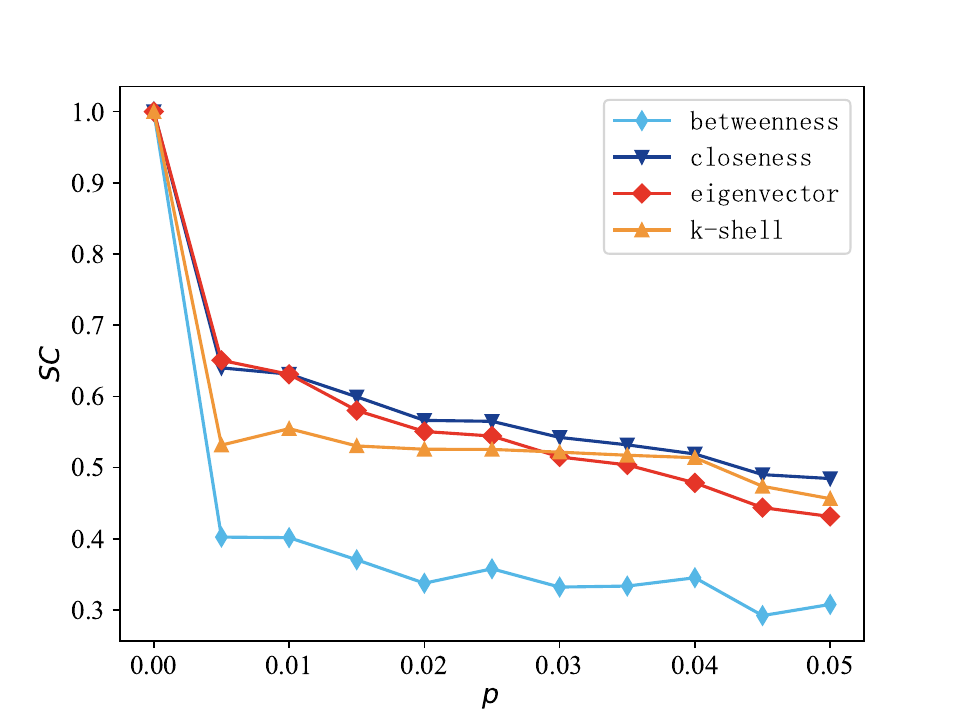}
        \end{minipage}	
    }
	\caption{The influence of rewiring edge pairs using the GA method on the Spearman rank correlation coefficient $SC$ between the true measure $C_T$ and manipulated measure $C_M$, with rewiring frequencies ranging from 0.5\% to 5\% of the total number of edges in the network.}
	\label{fig:2211} 
\end{figure*}

\begin{figure*}[htbp]
	\centering
	\subfloat[AS-733-A]{
        \begin{minipage}[b]{.3\linewidth}
            \centering
            \includegraphics[scale=0.3]{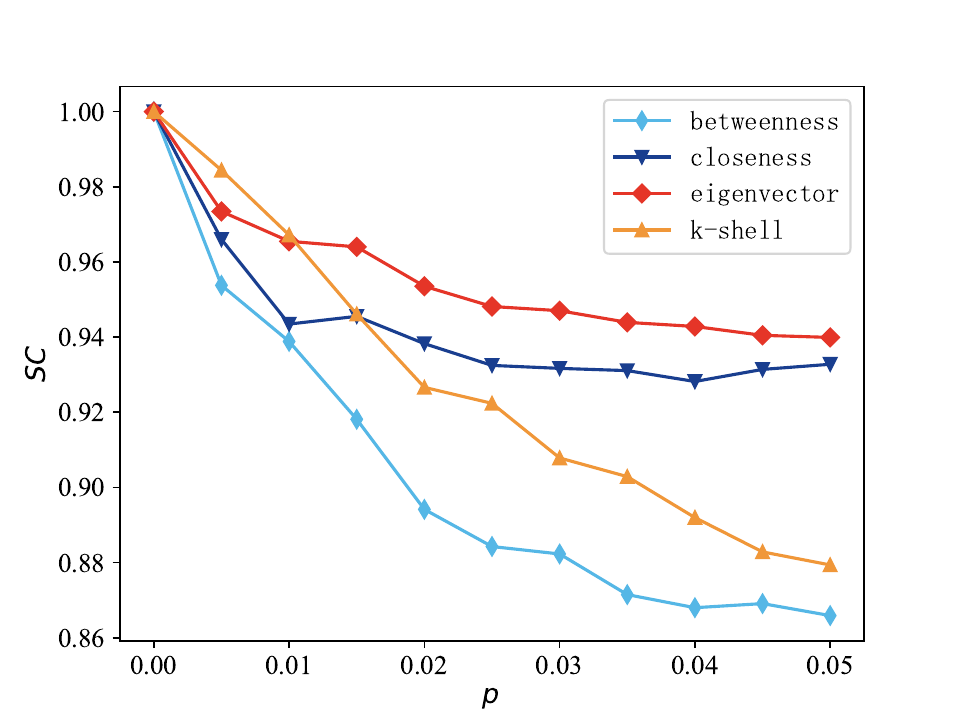}
        \end{minipage}
    }
    \subfloat[USPowerGrid]{
        \begin{minipage}[b]{.3\linewidth}
            \centering
            \includegraphics[scale=0.3]{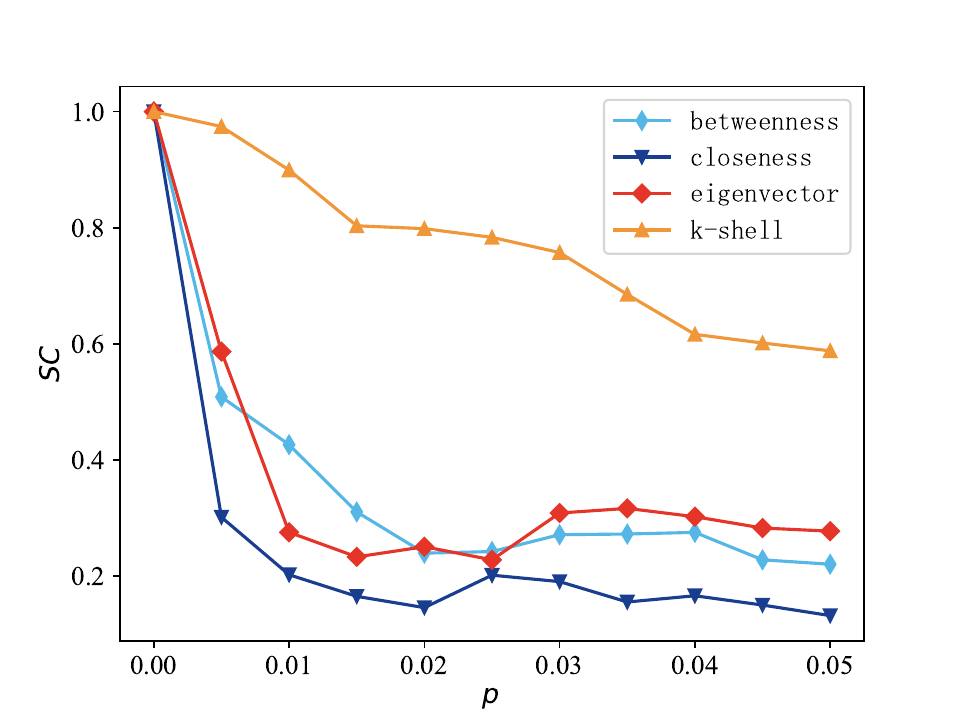}
        \end{minipage}
    }
	\subfloat[USAir97]{
        \begin{minipage}[b]{.3\linewidth}
            \centering
            \includegraphics[scale=0.3]{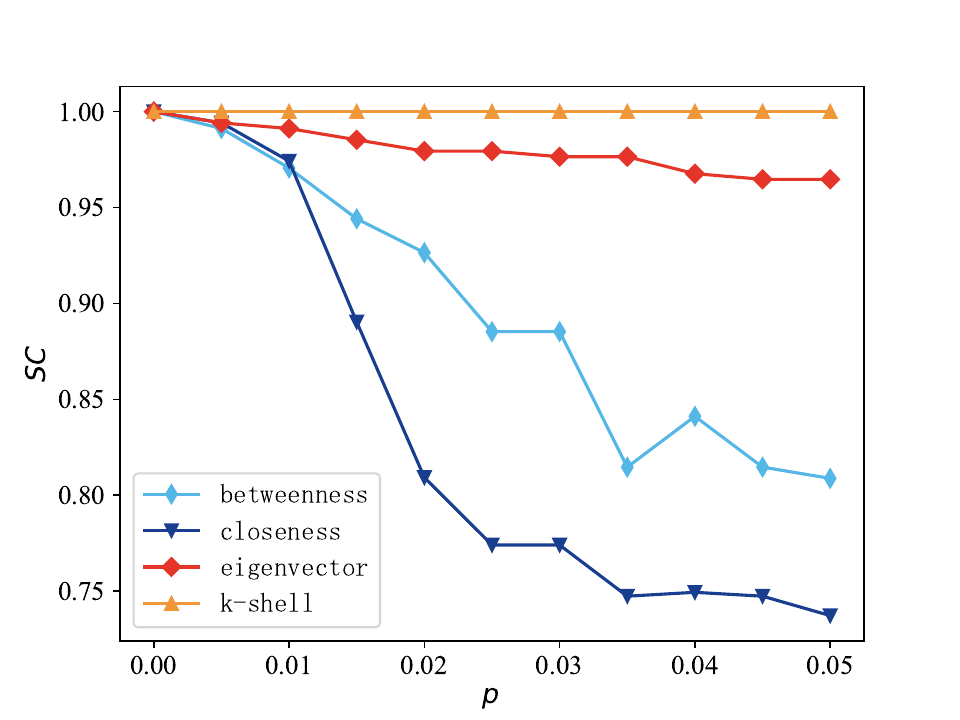}
        \end{minipage}
    }
  \vspace{-0.3cm}
	\\
	\subfloat[AS-733-E]{
        \begin{minipage}[b]{.3\linewidth}
            \centering
            \includegraphics[scale=0.3]{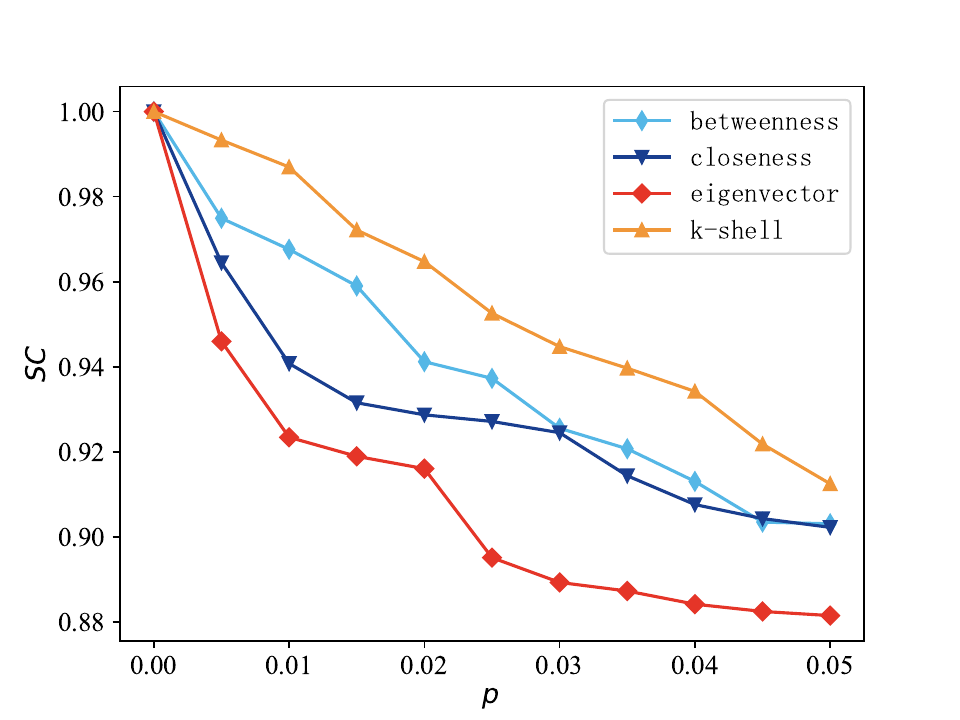}
        \end{minipage}
    }
	\subfloat[BCSPWR10]{
        \begin{minipage}[b]{.3\linewidth}
            \centering
            \includegraphics[scale=0.3]{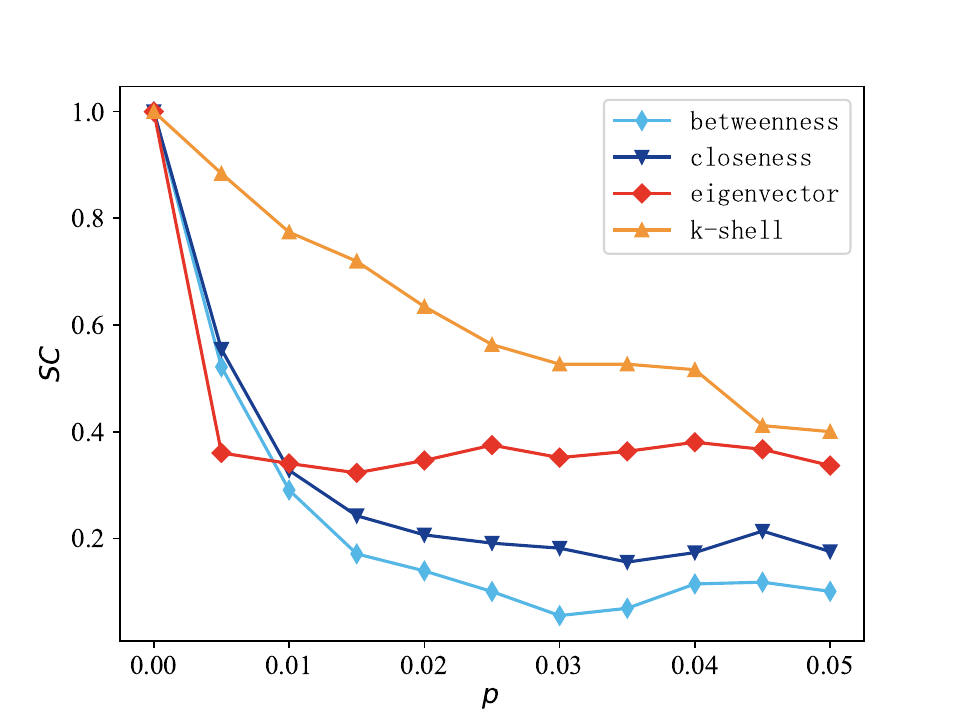}
        \end{minipage}
    }
    \subfloat[USAir10]{
        \begin{minipage}[b]{.3\linewidth}
            \centering
            \includegraphics[scale=0.3]{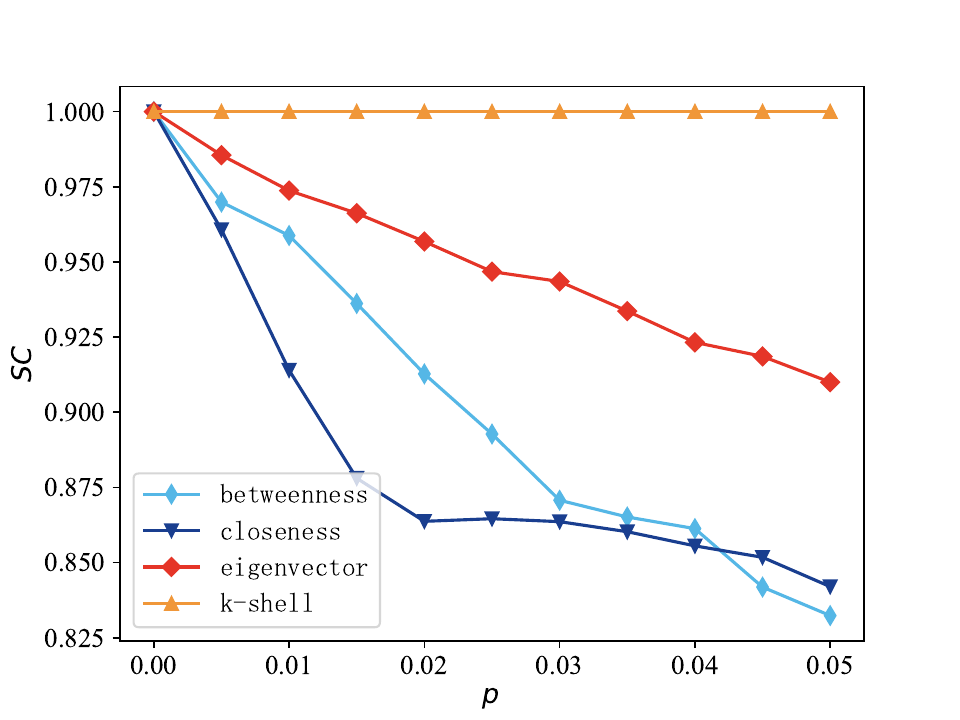}
        \end{minipage}	
    }
	\caption{The Spearman rank correlation coefficient $SC$ between the true centrality measure $C_T$ and the manipulated centrality measure $C_M$ of top-degree nodes, resulting from rewiring edge pairs using the GA method, is analyzed. The rewiring frequencies range from 0.5\% to 5\% of the total number of edges in the network.}
	\label{fig:2222} 
\end{figure*}

Through our previous experiments, we have validated that the GA method can effectively enhance the degree correlation of networks of different types while simultaneously improving their robustness. An interesting question arises: when we optimize network structure using the GA method, can various centrality measures of the network maintain their robustness?

The impact of using the GA method to rewire networks to enhance network degree correlation while affecting centrality measures is illustrated in Figure \ref{fig:2211}. As the number of rewirings increases, the Spearman correlation coefficient $SC$ for all centrality measures initially experiences a rapid decrease before reaching a relatively stable state. One key observation is that across all three types of networks, the robustness of closeness centrality and eigenvector centrality to changes is superior to that of betweenness centrality and k-shell. Especially for routing networks, the $SC$ of closeness centrality and eigenvector centrality can be maintained above 0.8. However, in power networks and flight networks, as the number of rewiring iterations increases, our centrality measures fail to maintain their robustness. We also observed that in disassortative networks, the variations in closeness centrality and eigenvector centrality were similar, indicating a certain correlation between these two centrality measures in disassortative networks.

In fact, in many cases, nodes ranking at the top are more important. Therefore, for each centrality measure, we only consider the robustness of the top 5\% ranked nodes under different rewiring frequencies. It can be observed that for routing networks and flight networks, all four centrality measures remain relatively stable. At a rewiring frequency of 5\%, the $SC$ of all centrality measures is above 0.73. However, in the power network, at a rewiring frequency of 5\%, the $SC$ of all centrality measures is below 0.6. This indicates that the centrality of top-ranked nodes in disassortative networks is more robust. This is because in disassortative networks, the centrality measures of top nodes often exhibit significant numerical differences, making it difficult for nodes with lower centrality measures to surpass others through rewiring. We also found that in the flight network, the k-shell centrality remained robust during the rewiring process. This is because in the flight network, there are numerous connections between high-degree nodes, which typically have higher k-shell. Therefore, rewiring hardly changes their k-shell. Additionally, in the power network, the k-shell also exhibits greater stability compared to other centrality measures.

In the power network, none of the centrality measures can maintain robustness. One possible reason is that in the power network, the degrees of different nodes are relatively close, and the centrality measures of different nodes do not differ significantly in numerical value. When using the GA method for rewiring, it is easier to enhance the centrality of nodes with lower centrality measures, effectively improving their ranking in the respective centrality measure.



\section{Conclusion}\label{thi}
In this work, we addressed the problem of maximizing network degree correlation through a limited number of rewirings while preserving the network degree distribution. We employed the widely used assortativity coefficient to quantify network degree correlation and demonstrated its equivalence to the $s-$metric under degree-preserving conditions. We analyzed the factors that influence changes in the assortativity coefficient under degree-preserving conditions. Based on our assumptions, we formulate the problem of maximizing the assortativity coefficient and verify its monotonic submodularity. Introducing the GA method, we showed through various experiments that it efficiently approximates the optimal solution and outperforms several heuristic methods in enhancing network degree correlation. Additionally, we proposed three heuristic rewiring methods, EDA, TA and PEA, aimed at enhancing network degree correlation. Experimental results revealed that TA is suitable for power networks, while PEA performs well in AS routing networks, and both heuristic methods outperform other baseline methods in flight networks.

We also investigated the impact of our rewiring strategies on network spectral robustness, thus expanding the application scenarios of our approaches. Experimental results demonstrated that our GA strategy effectively enhances both network degree correlation and spectral robustness across all three network types. Particularly, the proposed TA exhibited excellent performance in power networks, even surpassing the GA strategy. We analyzed whether several centrality measures can maintain robustness when the GA method rewires networks. We found that, for disassortative real networks, closeness centrality and eigenvector centrality are typically robust, whereas none of the centrality measures are robust for neutral power grids. When focusing on the top-ranked nodes, we observed that all centrality measures remain robust in disassortative networks.

In future work, we also plan to extend the application of our rewiring strategies to fields such as information propagation, exploring whether different rewiring strategies have varying impacts on network dynamic processes. Additionally, regarding altering network degree correlation, we intend to investigate different approaches for modifying network topology, such as adding or deleting edges, to understand how they affect network degree correlation.


\bibliography{ref}
\bibliographystyle{IEEEtran}

\end{document}